\newcommand{\laurent}[1]{{\color{blue} [#1]}}
\newcommand{\nop}[1]{}
\newcommand{\todo}[1]{{\color{red} [#1]}}
\newtheorem{theorem}{Theorem}
\newtheorem{lem}{Lemma}
\newtheorem{proper}{Property}
\newtheorem{remark}{Remark}
\newcommand{\wmname}{CosWM}
\title{Cosine Model Watermarking Against Ensemble Distillation}
\author{
    Laurent Charette\textsuperscript{\rm 1}\equalcontrib,
    Lingyang Chu\textsuperscript{\rm 2}\equalcontrib, 
    Yizhou Chen\textsuperscript{\rm 3}, 
    Jian Pei\textsuperscript{\rm 3},
    Lanjun Wang\textsuperscript{\rm 4}, 
    Yong Zhang\textsuperscript{\rm 1}
    \\
}
\begin{document}

\maketitle

\begin{abstract}
Many model watermarking methods have been developed to prevent valuable deployed commercial models from being stealthily stolen by model distillations.
However, watermarks produced by most existing model watermarking methods can be easily evaded by ensemble distillation, because averaging the outputs of multiple ensembled models can significantly reduce or even erase the watermarks.
In this paper, we focus on tackling the challenging task of defending against ensemble distillation.
We propose a novel watermarking technique named \wmname{} to achieve outstanding model watermarking performance against ensemble distillation.
\wmname{} is not only elegant in design, but also comes with desirable theoretical guarantees.
Our extensive experiments on public data sets demonstrate the excellent performance of \wmname{} and its advantages over the state-of-the-art baselines.
\end{abstract}


\section{Introduction}


High-performance machine learning models are valuable assets of many large companies.
These models are typically deployed as web services where the outputs of models can be queried using public application programming interfaces (APIs)~\cite{ribeiro2015a}.

\nop{
that is, to offer public application programming interfaces (APIs) to deployed models while preventing the models from being stolen.
}

A major risk of deploying models through APIs is that the deployed models are easy to steal~\cite{tramer2016a}. By querying the outputs of a deployed model through its API, many model distillation methods \cite{orekondy2019a, jagielski2019a, papernot2017a} can be used to train a replicate model with comparable performance as the deployed model. 
Following the context of model distillation~\cite{hinton2015a}, a replicate model is called a \emph{student model}; and the deployed model is called a \emph{teacher model}.

\nop{
A replicate model built in this way is often said to be \emph{distilled} from the deployed model.
}

A model distillation process is often imperceptible because it queries APIs in the same way as a normal user~\cite{orekondy2019a}.
To protect teacher models from being stolen, one of the most effective ways is model watermarking~\cite{szyller2019a}. 
The key idea is to embed a unique watermark in a teacher model, such that a student model distilled from the teacher model will also carry the same watermark.
By checking the watermark, the owner of the teacher model can identify and reclaim ownership of a student model.

\nop{
Typical watermarks include digital signatures embedded in the parameters (i.e., model weights) of a deployed model~\todo{\cite{}} and adversarial images designed to trigger specified outputs of a deployed model~\todo{\cite{}}.
}

\nop{a straight-forward idea is to check whether a new model is distilled from a deployed model. This will help the owner of the deployed model to identify and reclaim the ownership of a replicate model.
}

\nop{ the outputs of a model queried through the APIs provide rich information to replicate the model.}

\nop{
With the advent of Machine Learning as a Service (MLaaS)~\cite{ribeiro2015a}, many companies are seeking to monetize high-performance machine learning models by providing public access to such models through an application programming interface (API).
}

\nop{
A client can query data to the API and obtain the output of the model in return.
For some popular services, such as image labeling or text translation, many companies offer very competent services in the market.
However, with similar services being put on the market, these models are vulnerable to be stolen by ensemble distillation attacks.
}



Some model watermarking methods have been proposed to identify student models produced by single model distillation~\cite{szyller2019a, lukas2019a, jia2021a}.
However, as we discuss in Section~\ref{sec:rw}, watermarks produced by these methods can be significantly weakened or even erased by \emph{ensemble distillation}~\cite{hinton2015a}, which uses the average of outputs queried from multiple different teacher models to train a replicate model.

Ensemble distillation has been well-demonstrated to be highly effective at compressing multiple large models into a small size student model with high performance~\cite{bucilua2006a4, ba2014a}.
On the other hand, the effectiveness of ensemble distillation also poses a critical threat to the safety of deployed models.

As shown by extensive experimental results in Section~\ref{sec:experiments}, ensemble distillation not only generates student models with better prediction performance, but also significantly reduces the effectiveness of existing model watermarking methods in identifying student models. 
As a result, accurately identifying student models produced by ensemble distillation is an emergent task with top priority to protect teacher models from being stolen.

\nop{
It has been well demonstrated that using the averaged outputs of multiple models can significantly boost the robustness and prediction accuracy of the replicate model~\todo{\cite{}}.

However, as shown by extensive experimental results in Section~\ref{sec:experiments}, averaging the outputs of multiple deployed models dramatically reduces the significance of watermarks produced by existing methods.

Despite the outstanding merits it brought to model compression applications, ensemble distillation poses a big threat to

Sadly, the answer is no.
As discussed in Section~\ref{sec:rw}, 

We discovered that ensemble distillation~\todo{\cite{}} 
}

\nop{
An \emph{ensemble distillation attack} aims to steal high-performance commercial models deployed on a cloud platform by calling their APIs. 
Essentially, this attack uses multiple deployed commercial models as teacher models, and uses knowledge distillation~\cite{hinton2015a} to train a student model whose performance is as good as the ensemble of the teacher models.
One major incentive of ensemble distillation attacks is that it allows the attacker to forgo data labelling, which is crucial in order to train a state-of-the-art model and, more often than not, is an expensive endeavor, as demonstrated by the data collection efforts of ImageNet~\cite{deng2009a}.
Unlabelled data, however, are often readily available and inexpensive~\cite{halevy2009a}. This motivates adversaries to essentially replicate high performance models at a much lower cost using ensemble model distillation~\cite{hinton2015a} with unlabelled data. 
}

\nop{
Naturally, protecting high performance machine learning models as intellectual property from ensemble distillation attacks is in the best interest of the owners of the models. However, combating ensemble distillation attacks is very challenging.
First, it is difficult, if not impossible, to directly identify an on-going ensemble distillation attack by analyzing the API calling behaviours of users.
While some methods are developed to detect query patterns associated with some distillation attacks~\cite{juuti2019a}, those methods can be evaded by other variations of attacks~\cite{atli2019a}.
Second, many common identification features of a model like model weights or adversarial examples are not transferred from a teacher model to the student models~\cite{jia2021a, lukas2019a, szyller2019a}.
The weights of student models are likely unavailable in most distillation attack scenarios.
Moreover, if a student model is trained from an ensemble of teacher models, most watermarked teacher models outputs for adversarial or trigger examples are likely outweighed by the predictions from the other teacher models and thus are not transferred to the student model.
Last, it is also challenging to detect and distinguish the unique watermark of every watermarked teacher model from the ensemble of a distilled student model. 
}

In this paper, we focus on defending against model distillation, and we successfully tackle this task by introducing a novel model watermarking method named \wmname{}.
To the best of our knowledge, our method is the first model watermarking method with a theoretical guarantee to accurately identify student models produced by ensemble distillation.
We make the following contributions.

First, we present a novel method named \wmname{} that embeds a watermark as a cosine signal within the output of a teacher model.
Since the cosine signal is difficult to erase by averaging the outputs of multiple models, student models produced by ensemble distillation will still carry a strong watermark signal.

Second, under reasonable assumptions, we prove that a student model with a smaller training loss value will carry a stronger watermark signal.
This means a student model will have to carry a stronger watermark in order to achieve a better performance. Therefore, a student model intending to weaken the watermark will not be able to achieve a good performance.

Third, we also design \wmname{} to allow each teacher model to embed a unique watermark by projecting the cosine signal in different directions in the high-dimensional feature space of the teacher model.
In this way, owners of teacher models can independently identify their own watermarks from a student model.

Last, extensive experiment results demonstrate the outstanding performance of \wmname{} and its advantages over state-of-the-art methods.


\section{Related Works}
\label{sec:rw}

\nop{
\todo{@Laurent, 1. Try NOT to use the word `attack' in the entire paper. 2. Do not discuss exact model duplicate attack in related work, because these works are not closely related to our task. 3. Do not discuss distillation attacks in related work, because they are doing attacks, we are defending attacks, we are significantly different from the attacking methods, thus no need to waste your words.}

\todo{@Laurent, here are what you need to do to revise Section 2. First, you only discuss methods that defend against distillation attacks. Second, for each discussed method, briefly conclude their key ideas to defend against distillation attacks, and then conclude the key reason why they are not effective on ensemble distillation. Last, conclude our major difference from these related works, and briefly discuss why our method can deal with ensemble attack.}
}

\nop{
The early frameworks for neural network watermarking are designed to protect intellectual property of machine learning models against exactly copied models~\cite{rouhani2018a, uchida2017a, venugopal2011a, zhang2018a}. Those methods generally fall into two categories~\cite{boenisch2020a}.

The methods in the first category embed a watermark message directly in the weights of the model~\cite{rouhani2018a, uchida2017a}. Those are white-box methods, because one has to access to the model weights in order to extract the watermark.
White-box methods are incompatible to the ensemble distillation scenario concerned in this paper, because those methods have to access the weights of a stolen model, which are not obtainable.
To tackle the scenario studied in this paper, we need a black-box method that can work well with limited information about a suspicious  model, namely only the softmax output of the model.
Moreover, the distilled student models train their own weights independently from the weights of the teacher model.
Thus, the watermark in a teacher model will likely not be transferred to a student model.
\emph{DeepSigns}~\cite{rouhani2018a} does have a black-box version~\cite{chen2019b} using only the output layer. However, it relies on out-of-distribution inputs, which will not be used to train a student model in a reasonable distillation situation.

The methods in the second category use trigger images~\cite{adi2018a, zhang2018a} or adversarial examples~\cite{lemerrer2019a} and the corresponding predictions as a watermark.
Trigger images are unrelated images or marked images with a pattern or noise used as training instances when training a model.
Adversarial examples may be generated by slightly modifying some data points close to the boundary such that their predictions can be reliably used as a watermark.
\emph{IPGuard}~\cite{cao2019a}, a fingerprinting method, does not modify the model itself, but instead generates a set of adversarial examples that can be for identification.
While the methods in this category are mostly black-box, they perform poorly in the distillation case, since a student model in general does not have the same adversarial examples as the teacher model~\cite{liu2016a} and is not trained using trigger image instances.

For both categories of methods, multiple evasions or removal attacks exist~\cite{aiken2020a, chen2019a, hitaj2018a, shafieinejad2019a}, which lead to further extensions protecting against the attacks~\cite{chen2018a, fan2019a, li2019b, li2019a, namba2019a, wang2020a, wang2021a, zhao2020a}.
Other improvements include, for example, using trigger images with natural looking unnoticeable patterns~\cite{sakazawa2019a} and increasing the capacity of a DeepSigns watermark messages~\cite{chen2019b}.

}

\nop{
Distillation attacks, also known as extraction attacks, are used to replicate an existing model~\cite{tramer2016a}.
Threat models in these scenarios mainly involve a single teacher model used to train a student model.
Some attacks, such as \emph{Knockoff}~\cite{orekondy2019a}, can successfully replicate the accuracy of complex models by only consulting their outputs without any information about the training set or the architecture of the teacher model under attack.
Other approaches~\cite{jagielski2019a, papernot2017a} further devise methods to replicate predictions of the teacher model for any input, instead of only the input of training data provided by the teacher model.

PRADA~\cite{juuti2019a} is one of the first method designed to protect against model distillation attacks, and can predict a distillation attack from the API query patterns of a client.
However, it is not a watermark method, and it does not perform well against an extraction attack like \emph{Knockoff}~\cite{atli2019a}. DAWN~\cite{szyller2019a}, the first watermark approach against distillation attacks, changes the predictions of the watermarked teacher model on a small fraction of queries and uses these queries and the changed predictions as a trigger images, which are used by the distilled student model for training.
An alternative approach is \emph{Fingerprinting}~\cite{lukas2019a}, which generates adversarial examples that are more likely to be transferred only from one teacher model to the student models.
Fingerprinting has the advantage that it can be used without modifying the model.
Entangled watermarks~\cite{jia2021a} may also be used, which are trigger images to be used during training in a manner similar to~\cite{zhang2018a}, with a loss term that forces the representations of the watermarks to be close to that of the real training data.
This increases the probability of transferring watermarks to student models.

The watermark methods mentioned above~\cite{jia2021a, lukas2019a, szyller2019a} rely on adversarial examples. They fail if a student model is distilled from a big enough ensemble of models, as the incorrect predictions of an adversarial model are outweighed by the most likely correct predictions of the other models in the ensemble.
\laurent{For example, if a trigger image has a watermark label of \emph{`dog'}, and a real label of \emph{`cat`}, a student trained from an ensemble of one watermarked and two unwatermarked models will generally classify it as a \emph{`cat'} image as the \emph{`cat'} output value will be higher on average than the \emph{`dog'} output.}

There are parallels to be noticed between ensemble distillation attacks and collusion fingerprint attacks.
The latter consist of removing the watermark of a model by taking different version of the same model, bearing different watermarks and averaging their weights.
A watermarking framework, \emph{SecureMark\_DL} \cite{Xu2020b} is designed to protect against such attacks.
Two key distinctions between attacks are that a model obtained with ensemble distillation is not obtained through the same means, and its teachers do not necessarily all bear the same type of watermark.
Also, SecureMark\_DL is a white-box method, which is incompatible with the distillation attack threat model.
}

\nop{
There exist some defensive methods designed to protect API models against distillation.
}

In this section, we introduce two major categories of model watermarking methods and discuss why these methods can be easily evaded by ensemble distillation.

The first category of methods~\cite{uchida2017a, rouhani2018a, adi2018a, zhang2018a, lemerrer2019a} aim to protect machine learning models from being exactly copied.
To produce a watermark, an effective idea is to embed a unique pattern by manipulating the values of parameters of the model to protect~\cite{uchida2017a, rouhani2018a}. If a protected model is exactly copied, the parameters of the copied model will carry the same pattern, which can be used as a watermark to identify the ownership of the copied model.
Another idea to produce a watermark is to use backdoor images that trigger prescribed model predictions~\cite{adi2018a, zhang2018a, lemerrer2019a}. The same backdoor image will trigger the same prescribed model prediction on an exactly copied model. Thus, the backdoor images are also effective in identifying exactly copied models.

The above methods focus on identifying exactly copied models, but they cannot be straight-forwardly extended to identify a student model produced by ensemble distillation~\cite{hinton2015a}. Because the model parameters of the student model can be substantially different from the teacher model; and simple backdoor images of the teacher model are often not transferable to the student model, that is, the backdoor images may not trigger the prescribed model prediction on the student model~\cite{lukas2019a}.

The second category of methods aim to identify student models that are distilled from a single teacher model by single model distillation~\cite{tramer2016a}.

\nop{
One of the first defense mechanism designs against distillation to emerge was PRADA~\cite{juuti2019a}.
This method would identify a distillation attempt by monitoring the distribution of query inputs for each client.
If queries significantly differ from the natural data distribution a client would flagged and denied access to the service.
}

\nop{
This method, while having the potential of preventing distillation altogether by an adversary, can be evaded by using a querying strategy more in line with the natural data distribution~\cite{orekondy2019a}.
}

PRADA~\cite{juuti2019a} is designed to identify model distillations using synthetic queries that tend to be out-of-distribution. It analyzes the distribution of API queries and detects potential distillation activities when the distribution of queries deviates from the benign distribution.
However, it is not effective in identifying the queries launched by ensemble distillations, because these queries are mostly natural queries that are not out-of-distribution.

\nop{
\laurent{However, it can be easily evaded if adversaries adapt their querying strategies by using more natural data~\cite{orekondy2019a}.}
}
\nop{
However, it is not effective in identifying the queries launched by ensemble distillations, because these queries are mostly natural queries that are not out-of-distribution.
}

Another typical idea is to produce transferable backdoor images that are likely to trigger the same prescribed model prediction on both the teacher model and the student model.
DAWN~\cite{szyller2019a} generates transferable backdoor images by dynamically changing the outputs of the API of a protected teacher model on a small subset of querying images.
Fingerprinting~\cite{lukas2019a} makes backdoor images more transferable by finding common adversarial images that trigger the same adversarial prediction on a teacher model and any student model distilled from the teacher model.
Entangled Watermarks~\cite{jia2021a} forces a teacher model to learn features for classifying data sampled from the legitimate data and watermarked data.

\nop{
DAWN~\cite{szyller2019a} is the first watermark approach to tackle model distillation.
A deployed model watermarked with DAWN will, for a small fraction of queries, return a modified output by swapping the largest component of its output, its predicted label, with another arbitrarily chosen component, whose label is the watermark label.
The query inputs with their watermark labels are then saved by the API, and are used as the watermark key.
A suspected model can then be verified by querying the watermark inputs and computing the watermark success rate, which is fraction of returned outputs matching the watermark labels.
A high watermark success rate will indicate that the suspected model has been distilled from the watermarked model.}

\nop{
A different approach is Fingerprinting~\cite{lukas2019a}.
This method's watermarks are also input-prediction pairs, where the inputs are adversarial examples of the API model.
The watermark inputs are generated by an algorithm that favors adversarial examples transferable only from a model to any of its distilled models.
Verification of a suspected model is also performed using queries and a watermark success rate computation.
}

\nop{
Another proposed method is Entangled Watermarks~\cite{jia2021a}.
Once again the watermarks are input-output pairs, where the inputs are branded natural inputs and the outputs are arbitrary prediction labels different from the original input.
A model to be protected is then trained or fine-tuned using the the watermark data and real data alternately.
This step is combined with a training loss function that makes representations of trigger images close to that of real data.
This should make the trigger input-output pairs likelier to be transferred to the distilled models.
Identification is then performed by querying the watermark inputs to the suspected model and computing the watermark success rate.
}

The above methods are effective in identifying student models produced by single model distillation, but they cannot accurately identify student models produced by ensemble distillation. 

The reason is that, when an ensemble distillation averages the outputs of a watermarked teacher model and multiple other teacher models without a watermark, the prescribed model predictions of the watermarked teacher model will be weakened or even erased by the normal predictions of the other teacher models.
If multiple watermarked teacher models are used for ensemble distillation, the prescribed model prediction of one teacher model can still be weakened or erased when averaged with the predictions of the other teacher models, because the prescribed model predictions of different teacher models are not consistent with each other.

\nop{
a protected teacher model is  the prescribed model predictions of a backdoor image can be overwritten by 

The reason is that  However, 

The reason is that all three methods depend on watermark output predictions being different from that of other models.
However, in an ensemble the output prediction of the watermarked model will be outnumbered with the other models' outputs, which will erase most traces of the watermark in the ensemble outputs to be learned by the distilled model.
}

\nop{
\wmname{} distinguishes itself from the existing watermarking methods by being specifically designed to be robust against ensemble distillation.
\nop{The key difference uses a slight output modification in the shape of a cosine signal, and the trace of this signal is preserved after aggregation, even if the watermarked model is outnumbered.}
The cosine signal injected in the watermarked model outputs remains in the output of the ensemble model after aggregation.
The distilled model, aiming to match the output of the ensemble model, will also bear enough traces of the cosine signal to be reliably extracted by the extraction process.
}

The proposed \wmname{} method is substantially different from the other watermarking methods~\cite{szyller2019a, lukas2019a, jia2021a}.
The watermark of \wmname{} is produced by coupling a cosine signal with the output function of a protected teacher model.
As proved in Theorem~\ref{bound} and demonstrated by extensive experiments in Section~\ref{sec:experiments}, when an ensemble distillation averages the outputs of multiple teacher models, the embedded cosine signal will persist.
As a result, the watermarks produced by \wmname{} are highly effective in identifying student models produced by ensemble distillation.

\nop{
This allows the watermark produced by \wmname{} to be transferred to the distilled model, even if the watermarked model is outnumbered in the ensemble.
}

\section{Problem Definition}
\label{sec:pd}

\nop{
In this section, we first introduce a procedure of model replication using an ensemble of teacher models and then describe a watermarking task, in which 
a traceable mark is added into a teacher model and only reduces the model's accuracy within a given range.

\subsection{Model Replication via Ensemble Distillation}
}
Ensemble methods, such as bagging~\cite{buhlmann2002a}, aggregate the probability predictions of all models in an ensemble to create a more accurate model on average.
Ensemble models and distillation have been applied jointly since the first seminal studies on distillation~\cite{bucilua2006a4, ba2014a, hinton2015a}.
These distillation methods use a combination of KL loss \cite{kullback1951a} and cross-entropy loss \cite{bishop2006} in the training process. Cross-entropy loss requires ground truth labels.
Some recent state-of-the-art distillation methods~\cite{vongkulbhisal2019a, shen2020a} only use KL loss, and thus can work without access to the ground truth values.
This allows adversaries to replicate high performance models using ensemble model distillation and without ground truth labels.

Technically, let $\mathcal R=\{R_1, \ldots, R_{N}\}$ be a set of $N$ models trained to perform the same $m$-class classification task.
Each model $R_i$ outputs a probability prediction vector $R_i(\mathbf{x})$ on an input sample $\mathbf{x}\in\mathbb{R}^n$.
An adversary may effectively build an ensemble model by querying an unlabeled data set $X^S=\{\mathbf{x}^1,\ldots,\mathbf{x}^L\} $ to each model $R_1, \ldots, R_{N}$ 
and averaging the outputs, i.e., $\mathbf{\bar q}^l = \frac{1}{N}\sum_{i=1}^N R_i(\mathbf{x}^l)$ for $l=1,\ldots,L$.
The averaged output $\mathbf{\bar q}^l$ can then be used as the soft pseudo labels to train a student model $S$. 


We now formulate the task of watermarking against distillation from ensembles. Assume a model $R$ to be protected and the watermarked version $w(R)$, where $w(\cdot)$ is a watermarking function. Denote by $h(R)$ a function measuring the accuracy of model $R$ (on a given test data set) and by $g(R)$ a function measuring the strength of the watermark signal in model $R$. 

Let $S$ be an arbitrary model that is replicated from an ensemble distillation using $w(R)$ as a teacher.  $S$ may use some additional other teacher models.  Let $S'$ be another arbitrary model that is replicated from an ensemble distillation where $w(R)$ is not a teacher. The \textbf{task of model watermarking} is to design watermarking function $w(\cdot)$ such that it meets two requirements.  First, the accuracy loss in watermarking is within a specified tolerance range $\alpha>0$, i.e., $h(R) - h(w(R)) \leq \alpha$.  Second, the watermark signal model in $S$ is stronger than that in $S'$, i.e., $g(S) > g(S')$.

\nop{

Suppose there are $M$ suspected models, $S_1, \ldots, S_M$
possibly replicated from the ensemble distillation using $M$ different model sets $\mathcal R_1, \ldots, \mathcal R_M$. Without loss of generality, we assume the first $L$ model sets all contain the watermarked model $R_w$, i.e., $R_w \in \mathcal R_i$ for $i=1,\ldots,L$, while each of the remaining $M-L$ model sets do not contain the watermarked model $R_w$, i.e., $R_w \notin \mathcal R_i$ for $i=L+1,\ldots,M$.
Then, the task of model watermarking requires that, after applying a watermark function $w(\cdot)$ on model $R_0$, the accuracy of the watermarked model $R_w$ is dropped by less than a given tolerance $\alpha$, i.e.,
\[
|h(R_0) - h(R_w)| \leq \alpha
\]
and the first $r$ suspected models have stronger watermark signal than the rest $m-r$ suspected models, i.e.,
\[
\min_{1\leq i \leq L} g(S_i) \geq \max_{L+1\leq i \leq M} g(S_i).
\]
}

\section{\wmname{}}
\label{sec:polytope}

In this section, we present our watermarking method \wmname{}. We first explain the intuition of our method.  Then, we develop our watermarking framework to embed a periodic signal to a teacher model. Third, we describe how the embedded signal can be extracted from a student model learned using a watermarked teacher model.  Next, we provide strong theoretical results to justify our design. Last, we discuss possible extensions to ensembles containing multiple watermarked models.

\subsection{Intuitions}

The main idea of \wmname{} is to introduce a perturbation to the output of a teacher model. This perturbation is transferred onto a student model distilled from the teacher model and remains detectable with access to the output of the student model. 

\begin{figure}[t]
\begin{center}
\includegraphics[width=0.47\textwidth]{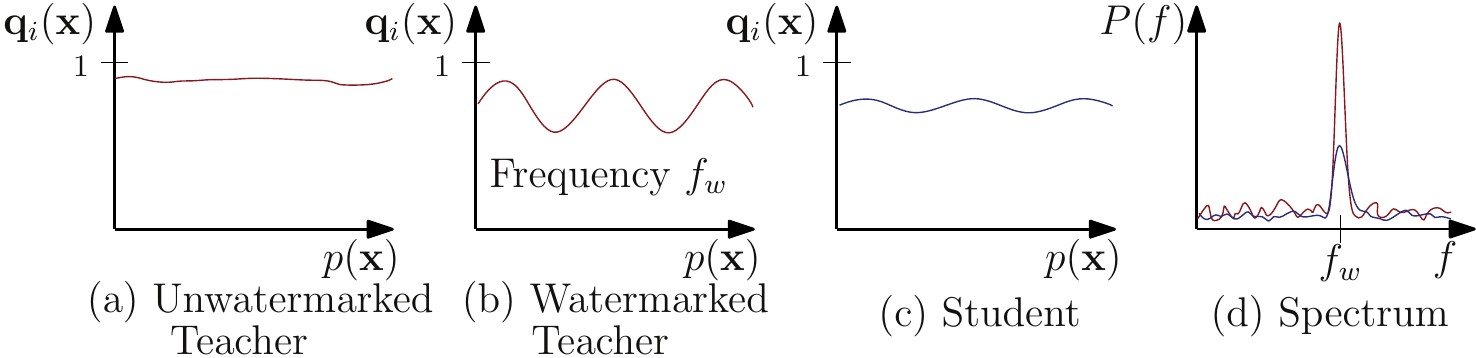}
\end{center}
\caption{The idea of \wmname{}, where $\mathbf{q}_i(\mathbf{x})$ is an model output component for image $\mathbf{x}$, $p(\mathbf{x})$ is a projection as described in equation (\ref{eq:projection}), $f$ and $P(f)$ are frequency and power spectrum values of a $p(\mathbf{x})$-$ \mathbf{q}_i(\mathbf{x})$ graph.}
\label{fig:keyidea}
\end{figure}

The idea is illustrated in Figure~\ref{fig:keyidea}.  Let $R$ be a model to be watermarked and $\mathbf{q}=R(\mathbf{x})$ be the output of the model $R$ on input $\mathbf{x}$.
We also convert $\mathbf{x}$ into a number $p(\mathbf{x})$ in a finite range. We can select a class $i^*$ and use the model prediction output $\mathbf{q}_{i^*}$ on the class to load our watermark. Let $\mathbf{q}_{i^*}(\mathbf{x})$ be the $i^*$-th element of vector $R(\mathbf{x})$.
Figure~\ref{fig:keyidea}(a) plots $(\mathbf{q}_{i^*}(\mathbf{x}), p(\mathbf{x}))$ without any added watermark signal. After adding a periodic perturbation $\phi(p(\mathbf{x}))$ of frequency $f_w$ to the output of $R$, the new output $\mathbf{q}_{i^*}(\mathbf{x})$ demonstrates some oscillations, as shown in Figure~\ref{fig:keyidea}(b). 
We keep the perturbation small enough so that the model predictions are mostly unaffected and the effect of the watermark on the model's performance is minimal.

A student model trying to replicate the behavior of the teacher model passively features a similar oscillation at the same frequency $f_w$.
In addition, even with the averaging effect of an ensemble of teacher models on the outputs, the periodic signal should still be present in some form. 
Since the averaging is linear, the amplitude is diminished by a factor of the number of the ensemble models as shown in Figure~\ref{fig:keyidea}(c). 
By applying a Fourier transform, the perturbation can be re-identified by the presence of a peak in the power spectrum at the frequency $f_w$ as shown in  Figure~\ref{fig:keyidea}(d).

\nop{
Next we describe how we apply the \wmname{} watermarking framework to embed a periodic signal to a teacher model in Subsection~\ref{embedding} and extract the signal from a suspected student model in Subsection~\ref{extraction}.
We conclude this section with some theoretical results in Subsection~\ref{sec:theoretical}, and a short discussion on the extension to ensembles containing multiple watermarked models in Subsection~\ref{extension_multi}.
}
\subsection{Embedding Watermarks to a Teacher Model}
\label{embedding}


Normally, an output $\mathbf{q}$ of a model $R$ on a given data point $\mathbf{x}$ is calculated from the softmax of the logits $\mathbf{z}\in\mathbb{R}^m$, i.e.,
\begin{equation}\label{eq:softmax}
    \mathbf{q}_i = \frac{e^{\mathbf{z}_i}}{\sum_{j=1}^m e^{\mathbf{z}_j}}, \ \mbox{for} \ i=1, ..., m,
\end{equation}
where $\mathbf{z}$ is a function of $\mathbf{x}$, and $\mathbf{q}_i$ is the $i$-th element of vector $\mathbf{q}$.  As a result, the output $\mathbf{q}$ has the following property.

\begin{proper}\label{prop:softmax}
Let $\mathbf{q}$ be a softmax of the logit output $\mathbf{z}$ of a model $R$. Then,
\begin{enumerate}
\label{softmax-properties}
    \item $0 \leq \mathbf{q}_i \leq 1$ for $i=1, \ldots, m$,
    \item $\sum_{i=1}^m \mathbf{q}_i = 1$.
\end{enumerate}
\end{proper}

We want to substitute $\mathbf{q}$ in the model inference by a modified output $\mathbf{\hat{q}}\in\mathbb{R}^m$ which features the periodic signal and satisfies Property~\ref{prop:softmax}.
However, only modifying $\mathbf{q}$ in the model inference by itself may degrade the performance of the model, and the loss in accuracy cannot be bounded. In order to mitigate this effect, we also use the modified output $\mathbf{\hat{q}}$ in training $R$. That is, we use $\mathbf{\hat{q}}$ to compute cross entropy loss in the training process.

To embed watermarks, we first define a watermark key $K$ that consists of a target class $i^*\in \{1,\ldots,m\}$, an angular frequency $f_w\in\mathbb{R}$, and a random unit projection vector $\mathbf{v}\in\mathbb{R}^n$, i.e., $K=(i^*, f_w,  \mathbf{v})$. Using $K$, we define a periodic signal function
\begin{equation}\label{eq:signal_function}
\mathbf{a}_i( \mathbf{x}) = \left\{
\begin{aligned}
    &\cos \left( f_w p(\mathbf{x}) \right), & \mbox{if} \ i = i^*, \\
    &\cos \left( f_w p(\mathbf{x}) + \pi \right), & \mbox{otherwise},
\end{aligned}
\right.
\end{equation} 
for $i \in \{1, \ldots, m\}$, where 
\begin{equation}\label{eq:projection}
   p(\mathbf{x})  = \mathbf{v}^{\intercal} \mathbf{x}.
\end{equation}

We consider single-frequency signals in this work and we plan to study watermark signals with mixed frequencies in our future work.
We adopt linear projections since they are simple one-dimensional functions of input data and can easily form a high-dimensional function space.
This leads to a large-dimensional space to select $\mathbf{v}$ from, and generally little interference between two arbitrary choices of $\mathbf{v}$.
As a consequence, we get a large choice of possible watermarks, and each watermark is concealed to adversaries trying to source back the signal with arbitrary projections.

We inject the periodic signal into output $\mathbf{q}$ to obtain $\mathbf{\hat{q}}$ as follows. For $i \in \{1, \ldots, m\}$, 
\begin{equation}\label{eq:modified_softmax}
   \mathbf{\hat{q}}_i =  \left\{
\begin{aligned}
& \frac{\mathbf{q}_i + \varepsilon (1 + \mathbf{a}_i( \mathbf{x}) )}{1 + 2 \varepsilon}, &   \mbox{if} \ i = i^*, \\
& \frac{\mathbf{q}_i + \frac{\varepsilon (1 + \mathbf{a}_i( \mathbf{x}) )}{m-1}}{1 + 2 \varepsilon}, &  \mbox{otherwise},
\end{aligned}
\right.
\end{equation}
where  $\varepsilon$ is an amplitude component for the watermark periodic signal.
As proved in Lemma~\ref{lemma-1} in Appendix~\ref{proof-lemma}, the modified output $\mathbf{\hat{q}}$ still satisfies both requirements in Property~\ref{softmax-properties}.
Therefore, it is natural to replace $\mathbf{q}$ by $\mathbf{\hat{q}}$ in inference. 

Nevertheless, if we only modify $\mathbf{q}$ into $\mathbf{\hat{q}}$ in inference, the inference performance can be degraded by this perturbation.
Since the modified output satisfies Property~\ref{softmax-properties}, we can use it in training as well to compensate for the potential performance drop.
To do that, we directly replace $\mathbf{q}$ by $\mathbf{\hat{q}}$ in the cross-entropy loss function.
Specifically, for a data point $\mathbf{x}$ with one-hot encoding true label $\mathbf{y}^t\in\mathbb{R}^m$, the cross-entropy loss during training can be replaced by
\begin{equation}
\label{eq:modified_loss}
L_{CE, wm} = -\sum_{j=1}^m \mathbf{y}^t_j \log \left(\mathbf{\hat{q}}_j\right).
\end{equation}

The model $R_w$ trained as such carries the watermark. By directly modifying the output, we ensure that the signal is present in every output, even for input data not used during training.
This generally results in a clear signal function in the output of the teacher model $R_w$ that is harder to conceal by noise caused by distillation training or by dampening due to ensemble averaging. 

 
 

\subsection{Extracting Signals in Student Models}
\label{extraction}

\nop{
In this subsection, we present our algorithm for ranking a list of suspected student models which may be distilled from an ensemble featuring the watermarked teacher model $R_w$.

When a collection of suspected models is presented, we test the watermark of model $R_w$ on each model and rank their extracted signal strength for the watermark's frequency $f_w$.
Then the models with the higher signal strength are more likely to have been distilled from an ensemble featuring $R_w$.
}

Let $S$ be a student model that is suspected of being distilled from a watermarked model $R_w$ or multiple ensembled teacher models including $R_w$. 
To extract the possible watermark from $S$, we need to query $S$ with a sample of student training data $\widetilde X^S=\{\mathbf{x}^1,\ldots,\mathbf{x}^{\widetilde L}\}$. 
According to~\cite{szyller2019a}, the owner of a teacher model can easily obtain $\widetilde X^S$ because the owner may store any query input sent by an adversary to the API.


Let the output of model $S$ on the input data $\widetilde X^S$ be $\widetilde Q^S =\{\mathbf{q}^1,\ldots,\mathbf{q}^{\widetilde L}\}$, where $\mathbf{q}^{l} \in\mathbb{R}^m$ for $l=1,\ldots, {\widetilde L}$.
For every pair $(\mathbf{x}^l, \mathbf{q}^l)$, we extract a pair of results $(\mathbf{p}_l, \mathbf{q}_{i^*}^l)$, where $\mathbf{p}_l = \mathbf{v}^{\intercal} \mathbf{x}^l$ as per Equation~\eqref{eq:projection}, $\mathbf{v}$ is in the watermark key of $R_w$ and $i^*$ is the target class when embedding watermarks to $R_w$.
We filter out the pairs $(\mathbf{p}_l, \mathbf{q}_{i^*}^l)$ with $\mathbf{q}_{i^*}^l \leq q_{min}$ in order to remove outputs with low confidence, where the threshold value $q_{min}$ is a constant parameter of the extraction process.
The surviving pairs are re-indexed into a set $\widetilde{D}^S = \{(\mathbf{p}_l, \mathbf{q}_{i^*}^l)\}_{l = 1, \ldots, \widetilde{M}}$, where $\widetilde{M}$ is the number of remaining pairs.
These surviving pairs $(\mathbf{p}_l, \mathbf{q}_{i^*}^l) \in \widetilde{D}^S$ are then used to compute the Fourier power spectrum, for evenly spaced frequency values spanning a large interval containing the frequency $f$.

To approximate the power spectrum, we use the Lomb-Scargle periodogram method~\cite{scargle1982studies}, which allows one to approximate the power spectrum $P(f)$ at frequency $f$ using unevenly sampled data. 
We give the formal definition of $P(f)$ in Section~\ref{sec:theoretical} when we analyze the theoretical bounds of $P(f)$.
Due to noise in the model outputs, it is preferable to have more sample pairs in $\widetilde{D}^S$ than the few required to detect a pure cosine signal.
In our experience, we reliably detect a watermark signal using 100 pairs for a single watermarked model and 1,000 pairs for an 8-model ensemble.

To measure the signal strength of the watermark, we define a maximum frequency $F$ and a window $\left[ f_w-\frac{\delta}{2}, f_w+\frac{\delta}{2} \right]$, where $\delta$ is a parameter for the width of the window and $f_w$ is the frequency in watermark key of $R_w$. 
Then, we calculate $P_{signal}$ and $P_{noise}$ by averaging spectrum values $P(f)$ on frequencies inside and outside the window, i.e.,
$P_{signal}=\frac{1}{\delta}\int_{f_w-\frac{\delta}{2}}^{f_w+\frac{\delta}{2}} P(f) df$ and $P_{noise} = \frac{1}{F - \delta} \left[\int_{0}^{f_w-\frac{\delta}{2}} P(f) df + \int_{f_w+\frac{\delta}{2}}^{F} P(f) df \right]$, respectively.
We use the signal-to-noise ratio to measure the signal strength of the watermark,
i.e.,
\begin{equation}
\label{psnr}
    P_{snr} = P_{signal} / P_{noise}.
\end{equation}

The algorithm is summarized in Algorithm~\ref{alg:extracting}.

\begin{algorithm}[t]
\SetAlgoLined
\SetKwInOut{Input}{Inputs}
\SetKwInOut{Output}{Output}
\Input{A suspected model $S$, \\
Samples $\widetilde X_S$ of the training data of $S$, \\
A watermark key $K=(i^*, f_w, \mathbf{v})$ of the \\ watermarked model $R_w$, \\
Filtering threshold value $q_{min}$.}
\Output{Signal strength.}
 Query $\widetilde X_S$ to $S$ and obtain outputs $\widetilde Q^S =\{\mathbf{q}^1,\ldots,\mathbf{q}^{\widetilde L}\}$.
 
 Compute projections $\mathbf{p}_l = \mathbf{v}^{\intercal} \cdot \mathbf{x}^l$, for $l=1,\ldots,\widetilde L$.
 
 Filter out outputs where $\mathbf{q}_{i^*}^l \leq q_{min}$, remaining pairs form the set $\widetilde{D}^S = \{(\mathbf{p}_l, \mathbf{q}_{i^*}^l)\}_{l = 1, \ldots, \widetilde{M}}$.
 
 Compute the Lomb-Scargle periodogram from the pairs $(\mathbf{p}_l, \mathbf{q}_{i^*}^l)$ in $\widetilde{D}^S$.
 
 Compute $P_{signal}$ and $P_{noise}$ by averaging spectrum values on frequencies inside and outside the window $\left[ f_w-\frac{\delta}{2}, f_w+\frac{\delta}{2} \right]$, respectively.
 
 Compute $P_{snr} = P_{signal} / P_{noise}$.
 
 \Return Signal strength $P_{snr}$.
 
 \caption{Extracting signal in a model}
 \label{alg:extracting}
\end{algorithm}


\subsection{Theoretical Analysis}
\label{sec:theoretical}

Here, we analyze the signal strength of $P_{signal}$ and  $P_{noise}$ and provide theoretical bounds for the power spectrum $P(f)$. Let us first recall two results from~\cite{scargle1982studies}.

Given a paired data set $D=\{(\mathbf{a}^l, \mathbf{b}_l) \in \mathbb{R}^n \times \mathbb{R}, l=1,\ldots,L\}$, an angular frequency $f$, a projection vector $\mathbf{v}$, and a sinusoidal function 
$
        s(\mathbf{x}) = \alpha + \beta\cos(f\mathbf{v}^{\intercal} \mathbf{x} + \gamma),
$ 
    where $\alpha$, $\beta$ and $\gamma$ are the parameters of $s(\mathbf{x})$, the \emph{best fitting points} $\mathbf{s}^*(D)$ for this paired data are
     \begin{equation}
     \label{best_fit}
      [ \mathbf{s}^*(D)]_l = \alpha^* + \beta^*\cos(f\mathbf{v}^{\intercal} \mathbf{a}^l + \gamma^*) \ \mbox{for}\ l=1,\ldots, L,
    \end{equation}
    where the parameters $\alpha^*$, $\beta^*$, $\gamma^*$ minimize the square error 
$
 \chi^2_f(D)= \sum_{l=1}^L [\mathbf{b}_l - s(\mathbf{a}^l)]^2.
$

Moreover, given a paired data set $D=\{(\mathbf{a}^l, \mathbf{b}_l) \in \mathbb{R}^n \times \mathbb{R}, l=1,\ldots,L\}$ and a frequency $f$, the \emph{unnormalized Lomb-Scargle periodogram} can be written as
\begin{equation}
\label{eq:LSdef}
P_D(f) = \frac{1}{2}\left[\chi_0^2(D) - \chi_f^2(D)\right],
\end{equation}
where $\chi_0^2(D) $ is the square error of the best constant fit to $\mathbf{b}_1,\ldots,\mathbf{b}_L$.

Now we are ready to give a theoretical bound on $P_D(f)$ for the output of the student model.


\begin{theorem}
\label{bound}
Suppose there are $N$ teacher models $R_1,\ldots, R_N$. Without loss of generality, let $R_1$ be a watermarked teacher model with watermark key $K=(i^*, f_w, \mathbf{v})$, and $S$ a student model distilled from an ensemble model of $R_1,\ldots, R_N$ on the student training data $X^S$.
Let $\widetilde{X}^S=\{\mathbf{x}^1,\ldots,\mathbf{x}^L\}$ be a sample subset of $X^S$.
Let $\mathbf{\hat q}^l = R_1(\mathbf{x}^l)$ be the output of model $R_1$, $\mathbf{\widetilde q}^l = \frac{1}{N-1}\sum_{i=2}^N R_i(\mathbf{x}^l)$ be the output of the ensemble model of $R_2, \ldots, R_N$, $\mathbf{\bar q}^l = \frac{1}{N}(\mathbf{\hat q}^l  + (N-1)\mathbf{\widetilde q}^l)$ be the output of the ensemble model of $R_1,\ldots, R_N$, and $\mathbf{q}^l = S(\mathbf{x}^l)$ the output of $S$ for the training data point $\mathbf{x}^l$.
Let $\hat{D} = \{(\mathbf{x}^l, \mathbf{\hat q}^l_{i^*}), l= 1,\ldots, L\}$, $\widetilde{D} = \{(\mathbf{x}^l, \mathbf{\widetilde q}^l_{i^*}), l= 1,\ldots, L\}$, $\bar{D} = \{(\mathbf{x}^l, \mathbf{\bar q}^l_{i^*}), l= 1,\ldots, L\}$ and  $D = \{(\mathbf{x}^l, \mathbf{q}^l_{i^*}), l= 1,\ldots, L\}$ be paired data sets.
Then, the unnormalized Lomb-Scargle periodogram value $P_D(f)$ for the student output at angular frequency $f$ has the following bounds
\begin{equation}\label{eq:bound}
 \frac{1}{2}{\left[ \chi_0^2(D){-}\tau_1{+}L_{se} \right]} {\geq} P_D(f) {\geq} \frac{1}{2}{\left[ \chi_0^2 (D){-}\tau_2{-}L_{se} \right]},
\end{equation}
where
\begin{align*}
\tau_1 = {\chi}^2_f(\bar{D}), \ 
&\tau_2=  \frac{1}{N^2}{\chi}^2_f(\hat{D}) +
\left(\frac{N-1}{N}\right)^2 {\chi}^2_f(\widetilde{D}),\\
&L_{se} = \sum_{l=1}^L \left(\mathbf{\bar q}^l_{i^*} -  \mathbf{q}^l_{i^*} \right)^2.\\
\end{align*}
\proof See Appendix~\ref{proof}.
\end{theorem}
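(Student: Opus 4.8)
The plan is to first collapse the two-sided periodogram inequality into a single two-sided bound on the student residual $\chi^2_f(D)$. By the definition in Equation~\eqref{eq:LSdef}, $P_D(f)=\frac12[\chi_0^2(D)-\chi_f^2(D)]$, and since the term $\chi_0^2(D)$ appears identically at both ends of \eqref{eq:bound}, the claim is equivalent to
\begin{equation*}
\chi_f^2(\bar D) - L_{se} \;\le\; \chi_f^2(D) \;\le\; \tau_2 + L_{se}.
\end{equation*}
So it suffices to sandwich the student's best-sinusoid residual $\chi_f^2(D)$ between the full-ensemble residual $\tau_1=\chi_f^2(\bar D)$ and the weighted sub-ensemble residuals collected in $\tau_2$, with the slack measured by the student-versus-ensemble discrepancy $L_{se}$.

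The structural fact that drives everything is that, for fixed $f$ and $\mathbf v$, every admissible sinusoid $s(\mathbf x)=\alpha+\beta\cos(f\mathbf v^{\intercal}\mathbf x+\gamma)$ from \eqref{best_fit} is a linear combination of the three fixed vectors $1$, $\cos(f\mathbf v^{\intercal}\mathbf x^l)$, $\sin(f\mathbf v^{\intercal}\mathbf x^l)$. Hence $\chi^2_f(\cdot)$ is the squared distance from a data vector to this three-dimensional subspace $V_f$, i.e.\ a least-squares projection residual. Two consequences are used repeatedly: (i) evaluating the sum of squares at any particular sinusoid can only exceed the minimum, which yields upper bounds on $\chi_f^2$; and (ii) a convex combination of optimal sinusoids is itself an admissible sinusoid, so optimal fits may be blended.

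For the left inequality (lower bound on $\chi_f^2(D)$, i.e.\ upper bound on $P_D(f)$), I would feed the student's own optimal fit $\mathbf s^*(D)$ into the ensemble data $\bar D$. By optimality of the ensemble fit, $\chi_f^2(\bar D)\le\sum_l(\mathbf{\bar q}^l_{i^*}-[\mathbf s^*(D)]_l)^2$; splitting $\mathbf{\bar q}^l_{i^*}-[\mathbf s^*(D)]_l=(\mathbf{\bar q}^l_{i^*}-\mathbf q^l_{i^*})+(\mathbf q^l_{i^*}-[\mathbf s^*(D)]_l)$ and expanding gives $L_{se}$ from the first squares and $\chi_f^2(D)$ from the last, which rearranges to the stated bound. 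For the right inequality, I would instead blend the two sub-ensemble fits into the admissible sinusoid with values $\frac1N[\mathbf s^*(\hat D)]_l+\frac{N-1}{N}[\mathbf s^*(\widetilde D)]_l$, legitimate by consequence (ii), and plug it into the student data. Using the defining relation $\mathbf{\bar q}^l_{i^*}=\frac1N\mathbf{\hat q}^l_{i^*}+\frac{N-1}{N}\mathbf{\widetilde q}^l_{i^*}$, the expansion of $\sum_l(\mathbf q^l_{i^*}-\text{(blended fit)})^2$ contributes $L_{se}$ from the $\mathbf q-\mathbf{\bar q}$ part and the squared weights $\tfrac1{N^2}$ and $\bigl(\tfrac{N-1}{N}\bigr)^2$ multiplying $\chi_f^2(\hat D)$ and $\chi_f^2(\widetilde D)$, reproducing $\tau_2$.

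The main obstacle, in both directions, is the control of the cross terms produced by these expansions: the inner product $\langle \mathbf{\bar q}_{i^*}-\mathbf q_{i^*},\,\mathbf q_{i^*}-\mathbf s^*(D)\rangle$ on one side, and the mixed products among $(\mathbf q-\mathbf{\bar q})$, $(\mathbf{\hat q}-\mathbf s^*(\hat D))$ and $(\mathbf{\widetilde q}-\mathbf s^*(\widetilde D))$ on the other. Orthogonality of each least-squares residual to $V_f$ eliminates the $V_f$-component of these products, which is precisely where the projection viewpoint $\chi_f^2(\cdot)=\|(I-P_{V_f})(\cdot)\|^2$ earns its keep; the surviving remainder must then be shown nonpositive or absorbed into the slack. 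Making this rigorous without resorting to a looser Cauchy–Schwarz estimate — which would blur the clean linear $L_{se}$ coefficient — is the delicate step, so I would organize the entire argument around the projection $I-P_{V_f}$ to keep the cross-term bookkeeping transparent.
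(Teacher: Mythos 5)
Your proposal reconstructs essentially the paper's own proof. The paper obtains the upper bound on $P_D(f)$ by evaluating the student's optimal fit $\mathbf{s}^*(D)$ on the ensemble data $\bar{D}$ and invoking optimality of $\mathbf{s}^*(\bar{D})$, and the lower bound by noting that $\frac{1}{N}\mathbf{s}^*(\hat{D})+\frac{N-1}{N}\mathbf{s}^*(\widetilde{D})$ is again an admissible sinusoid at frequency $f$ and evaluating it on the student data $D$ --- exactly your two test-function choices, with the same decompositions producing $L_{se}$, $\tau_1$ and $\tau_2$.

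The one point of divergence is instructive: the cross terms you single out as the delicate step are simply dropped in the paper. Its proof invokes a ``triangular inequality'' to pass, e.g., from $\sum_{l}\left[\mathbf{q}^l_{i^*}-[\mathbf{s}^*(D)]_l\right]^2$ to $\sum_{l}\left[\bar{\mathbf{q}}^l_{i^*}-[\mathbf{s}^*(D)]_l\right]^2 - L_{se}$, and likewise in the reverse direction for the $\tau_2$ bound; each such step asserts an inequality of the form $\|a+c\|^2\le\|a\|^2+\|c\|^2$, which is valid only when $\langle a,c\rangle\le 0$, and no such sign condition is verified anywhere. Your hope that orthogonality to the sinusoid subspace rescues this does not work either: both residual factors in the surviving cross term already lie in the orthogonal complement of your $V_f$, so projection eliminates nothing further. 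Handled rigorously (Cauchy--Schwarz or Young's inequality), the slack degrades to a term of order $\sqrt{\chi^2_f(\cdot)\,L_{se}}$ or introduces constant factors in front of $\chi^2_f(\bar{D})$, $\chi^2_f(\hat{D})$, $\chi^2_f(\widetilde{D})$, rather than the clean additive $L_{se}$ stated in the theorem. So your plan matches the paper's argument step for step, and the obstacle you honestly decline to wave away is a genuine gap in the published proof itself, not a deficiency of your reconstruction relative to it.
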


Theorem~\ref{bound} provides several insights.
\begin{remark}
When a student model is well trained by a teacher model, $L_{se}$ is generally small. 
\end{remark}

\begin{remark} 
\label{rem:remark2}
Consider the case where $f=f_w$.
If we choose our sample $\widetilde{X}^S$ with high confidence output scores on the $i^*$-th class, for example by filtering as described in Algorithm~\ref{alg:extracting}, 
${\chi}^2_{f_w}(\hat{D})$ should be small enough to be negligible by our watermark design in the teacher model.
We then discuss the following two cases.

\textbf{Case I}: When $N=1$, there is only one watermarked teacher to distill a student model. Then, after neglecting ${\chi}^2_{f_w}(\hat{D})$, the left inequality of Equation~\eqref{eq:bound} becomes 
\begin{equation*}
  P_D(f_w) \geq  \frac{1}{2} \left[ \chi_0^2(D)  -L_{se}\right] .
\end{equation*}
This implies that we can observe a significant signal for the output of the 
student model at frequency $f_w$ when the output of the student model is close to that of the teacher model.

\textbf{Case II}: When $N\neq 1$, since there is no sinusoidal signal in $\widetilde{\mathbf{q}}^l_{i^*}$, for $l=1,\ldots,L$, and the sinusoidal signal in $\bar{\mathbf{q}}^l_{i^*}$, for $l=1,\ldots,L$ is, proportional to $\frac{\varepsilon}{N}$, $\tau_2$ increases as $N$ increases. However, to keep the watermark signal significant in the output of the student model, 
one can increase the watermark signal amplitude $\varepsilon$ in the teacher model $R_1$, which indirectly increases $\chi_0^2(D)$. This is due to the fact that if $\varepsilon$ increases, $\chi_0^2(\hat{D})$ also increases.
Since $L_{se}$ is small when a student model is well trained by the teacher model, $\chi_0^2(D)$ increases as well.
This implies that we can detect the watermark in the output of the student model at frequency $f_w$ by increasing the watermark signal in the teacher model $R_1$ when $N$ is large. We validate this observation in our experiments in Section~\ref{sec:single}.
\end{remark}

\begin{remark}
\label{rem:remark3}
When $f\neq f_w$, since there is no sinusoidal signal at frequency $f$ in $\hat{\mathbf{q}}^l_{i^*}$, $\widetilde{\mathbf{q}}^l_{i^*}$, and $\bar{\mathbf{q}}^l_{i^*}$ for $l=1,\ldots,L$, $\chi^2_f(\hat{D})$, $\chi^2_f(\widetilde{D})$ and  $\chi^2_f(\bar{D})$ are generally large. Thus, the values of both sides of the inequality in Equation~\eqref{eq:bound} are small, which implies that there is no sinusoidal signal for the output of the student model at frequency $f\neq f_w$.
\end{remark}

\subsection{Multiple Watermarked Teacher Models}
\label{extension_multi}
Consider a student model trained on the output of an ensemble model that consists of two or more teacher models with watermarks.  Can those watermarks be detected in the student model?

We argue that it should be possible to extract each signal if the watermark keys are different.
The reason for this is that a signal embedded using watermark key $K_1 = (i_1, f_1, \mathbf{v}^1)$ appears as noise for an independent watermark $K_2 = (i_2, f_2, \mathbf{v}^2)$.
Since noise has low overall spectrum values, the resulting ensemble output spectrum will be similar to an ensemble with only one watermarked model.
Therefore, each signal should be detectable using its respective key.
This highlights the importance that $\mathbf{v}$ should preferably be a high dimensional vector that can provide more independent random choices for the watermark key $K$.


\begin{figure*}[t]%
    \centering
    \subfigure[Unwatermarked]{
    \includegraphics[width = 0.32\textwidth]{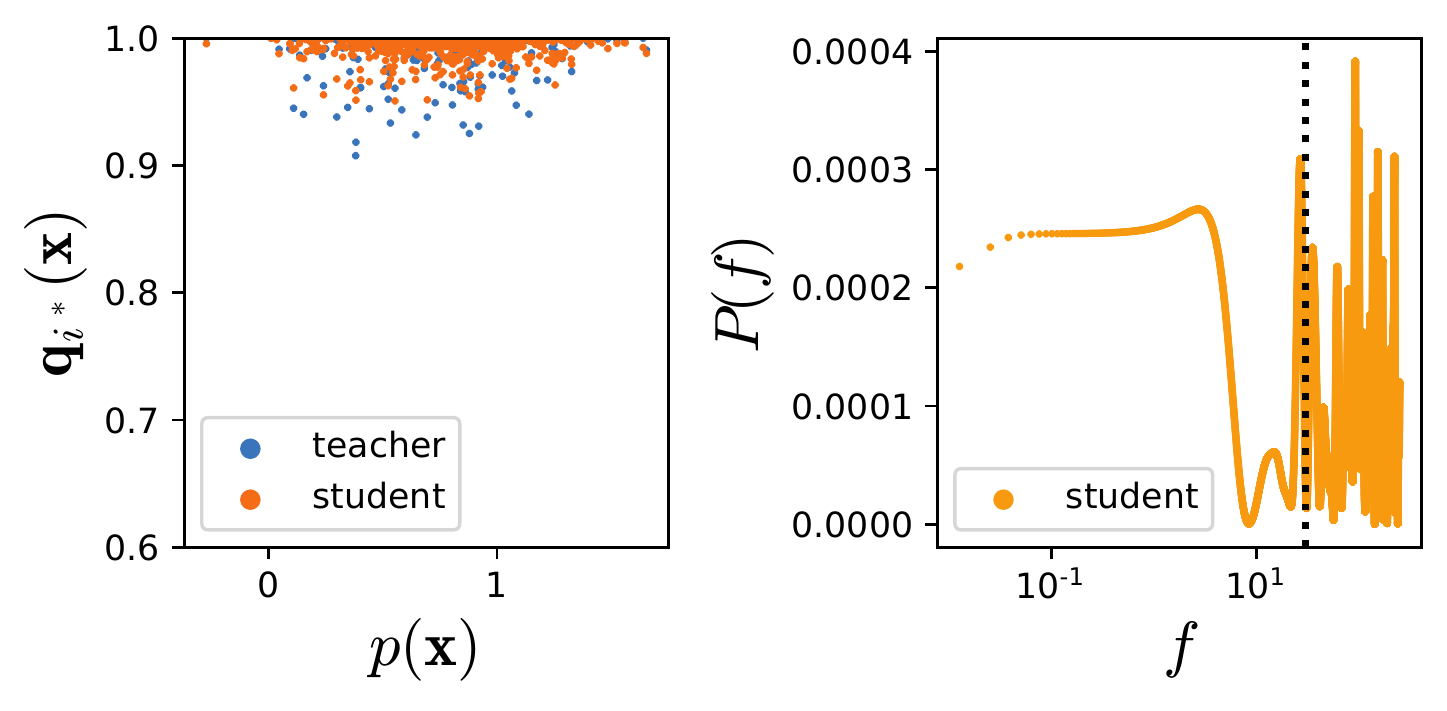}
    }
    \subfigure[Watermarked -- matching projection]{
    \includegraphics[width = 0.32\textwidth]{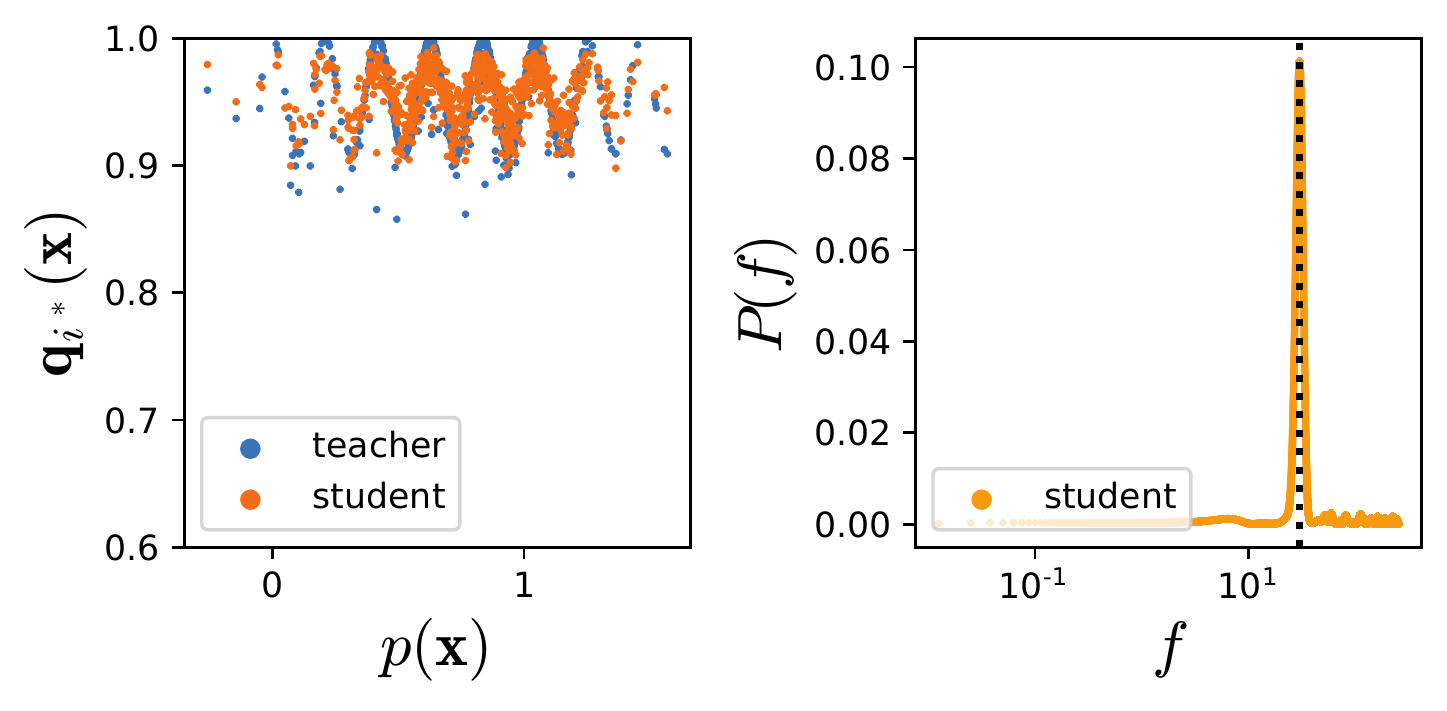}
    }
    \subfigure[Watermarked -- non-matching projection]{
    \includegraphics[width = 0.32\textwidth]{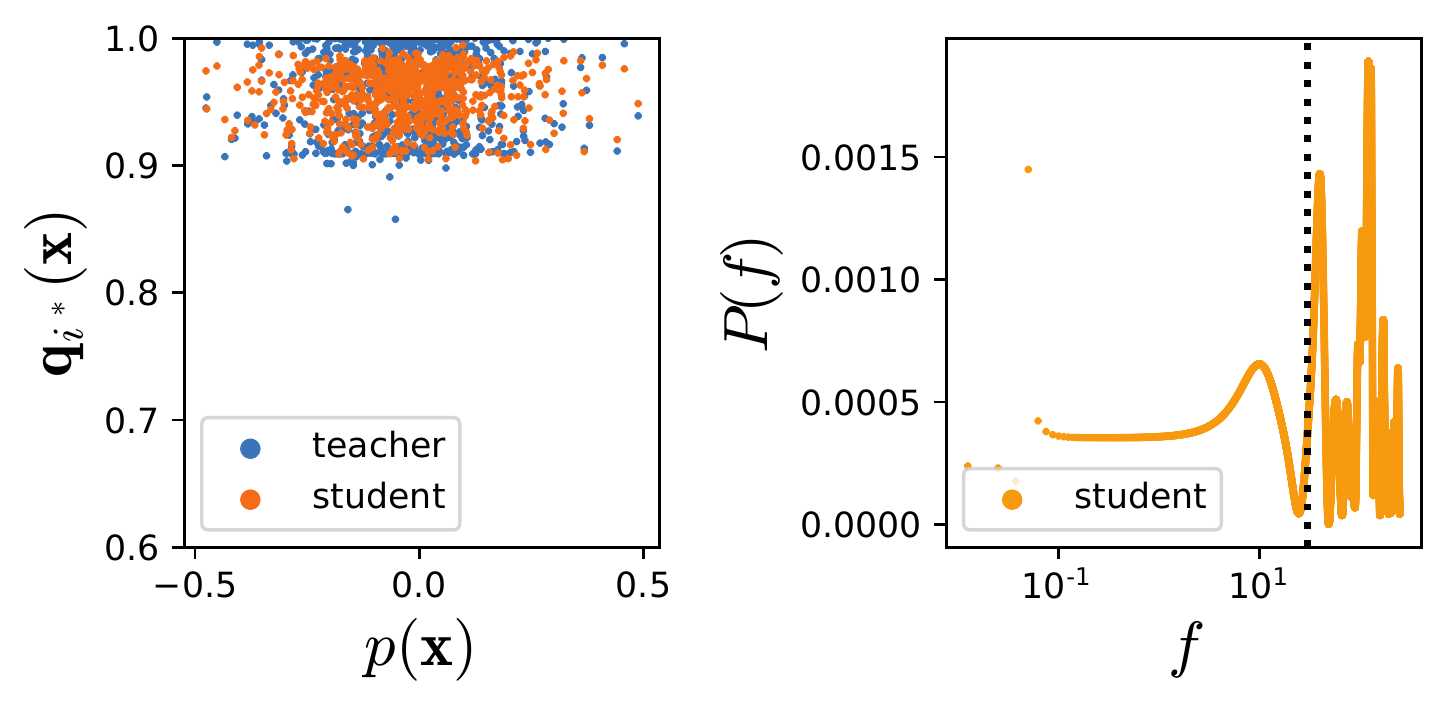}
    }
    \caption{
    A case study of the watermarking mechanism in \wmname{}.
    The black vertical line indicates $f = 30.0$. In each subgraph, the left plots the target class output $\mathbf{q}_i(\mathbf{x})$ of the teacher model and the student model as a function of projection value $p(\mathbf{x})$, and the right plots the power spectrum value $P(f)$ for the output of the student model as a function of frequency $f$.
    }
    \label{fig:case_study}
\end{figure*}

\section{Experiments}
\label{sec:experiments}

In this section, we evaluate the performance of \wmname{} on the model watermarking task. We first describe the settings and data sets in Section~\ref{sec:settings}.  Then we present a case study to demonstrate the working process of \wmname{} in Section~\ref{sec:case}. We compare the performance of all the methods in two scenarios in Sections~\ref{sec:single} and~\ref{sec:multiple}. We analyze the effect of the amplitude parameter  $\varepsilon$ and the signal frequency parameter $f_w$ on the performance of \wmname{} in Appendices~\ref{appx:ens_size_graphs} and~\ref{appx:frequency}, respectively. We also analyze the effects of using ground truth labels during distillation in Appendix~\ref{appx:truelabel}.

\subsection{Experiment Settings and Data Sets}
\label{sec:settings}

We compare \wmname{} with two state-of-the-art methods, DAWN \cite{szyller2019a} and Fingerprinting \cite{lukas2019a}.
We implement \wmname{} and replicate DAWN in PyTorch 1.3.
The Fingerprinting code is provided by the authors of the corresponding paper \cite{lukas2019a} and is implemented in Keras using a TensorFlow v2.1 backend.
All the experiments are conducted on Dell Alienware with Intel(R) Core(TM) i9-9980XE CPU, 128G memory, NVIDIA 1080Ti, and Ubuntu 16.04.

We conduct all the experiments using two public data sets, FMNIST \cite{fmnist2017}, and CIFAR10 \cite{cifar2009}. 
We report the experimental results on CIFAR10 in this section and the results on FMNIST in Appendix~\ref{appx:fmnist}.

The CIFAR10 data set contains natural images in 10 classes. It consists of a training set of 50,000 examples 
and a test set of 10,000 examples. We partition all the training examples randomly into two halves, with use one half for training the teacher models and the other half for distilling the student models.
For each data set the feature vectors are normalized to the range $[0, 1]$.

In all experiments, we use ResNet18 \cite{he2016a}. All models are trained or distilled for 100 epochs to guarantee convergence. The models with the best testing accuracy during training/distillation are retained.

\subsection{A Case Study}\label{sec:case}

We conduct a case study to demonstrate the watermarking mechanism in \wmname{}. We first train one watermarked teacher model and one non-watermarked teacher model using the first half of the training data, and then distill one student model from each teacher model using the second half of the training data.
To train the watermarked teacher model, we set the signal amplitude $\varepsilon = 0.05$ and the watermark key $K=(f_w, i^*, \mathbf{v}^0)$ with $f_w= 30.0$, $i^*=0$ and $\mathbf{v}^0$ a unit random vector.
For extraction, we set $q_{min}$ to be the first quartile of all $\mathbf{q}_{i^*}(\mathbf{x})$ values for 1,000 randomly selected training examples whose ground truth is class $i^*$.
Code for this case study is available online~\footnote{\url{https://developer.huaweicloud.com/develop/aigallery/notebook/detail?id=2d937a91-1692-4f88-94ca-82e1ae8d4d79}}.

We analyze the output of the teacher models and the student models for both the time and frequency domains in Figure~\ref{fig:case_study} for three cases.
In Figures~\ref{fig:case_study}(a), (b), and (c) for the three cases, we plot $\mathbf{q}_{i^*}(\mathbf{x})$ vs.\ $p(\mathbf{x})$ in time domain for both the teacher model and the student model in the left graph, and $P(f)$ vs.\ $f$ in the frequency domain for the student model in the right graph.

In the first case, Figure~\ref{fig:case_study}(a) shows the results for the non-watermarked teacher model and the student model.
There is no sinusoidal signal in the output for either the teacher model or the student model at frequency $f_w$ with projection vector $\mathbf{v}^0$. 

In the second case, Figure~\ref{fig:case_study}(b) shows the results for the watermarked teacher model and the student model.  The accuracy loss of the watermarked teacher model is within $1\%$ of the accuracy of the unwatermarked teacher model in Figure~\ref{fig:case_study}(a).
We extract the output of the watermarked teacher model and the student model using the watermark key $K$. 
The output of the teacher follows an almost perfect sinusoidal function and the output of the student model is close to the output of the teacher model in the time domain. 
In the frequency domain, the student model has a very prominent peak at frequency $f_w$. 
This observation validates Remark~\ref{rem:remark2} in Section~\ref{sec:theoretical} when $N=1$.

In the last case, we replace $\mathbf{v}^0$ by a different unit random vector $\mathbf{v}^1$ in the watermark key $K$ to extract the output of the watermarked teacher model and the student model. 
The results are shown in Figure~\ref{fig:case_study}(c). 
The output of both the teacher model and the student model is almost indiscernible from noise.
Thus, there is no significant peak for the output of the student model in the power spectrum at frequency $f_w$.  This observation validates Remark~\ref{rem:remark3} in Section~\ref{sec:theoretical}.

\begin{figure*}[t]%
    \centering
    \subfigure[Single Teacher]{
    \includegraphics[width = 0.23\textwidth]{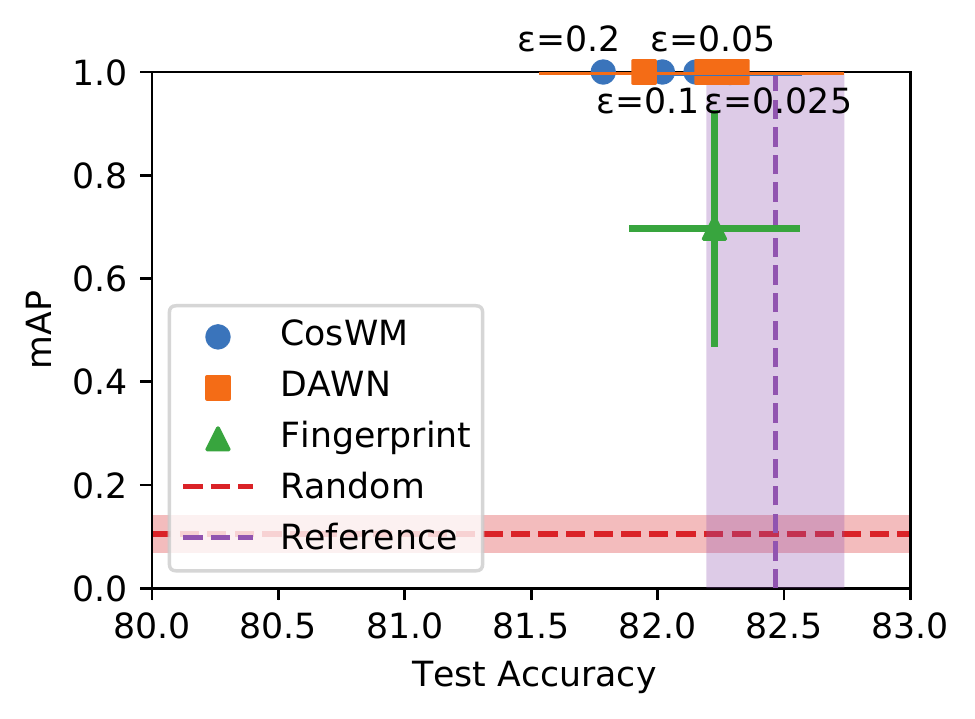}
    }
    \subfigure[2-model Ensemble]{
    \includegraphics[width = 0.23\textwidth]{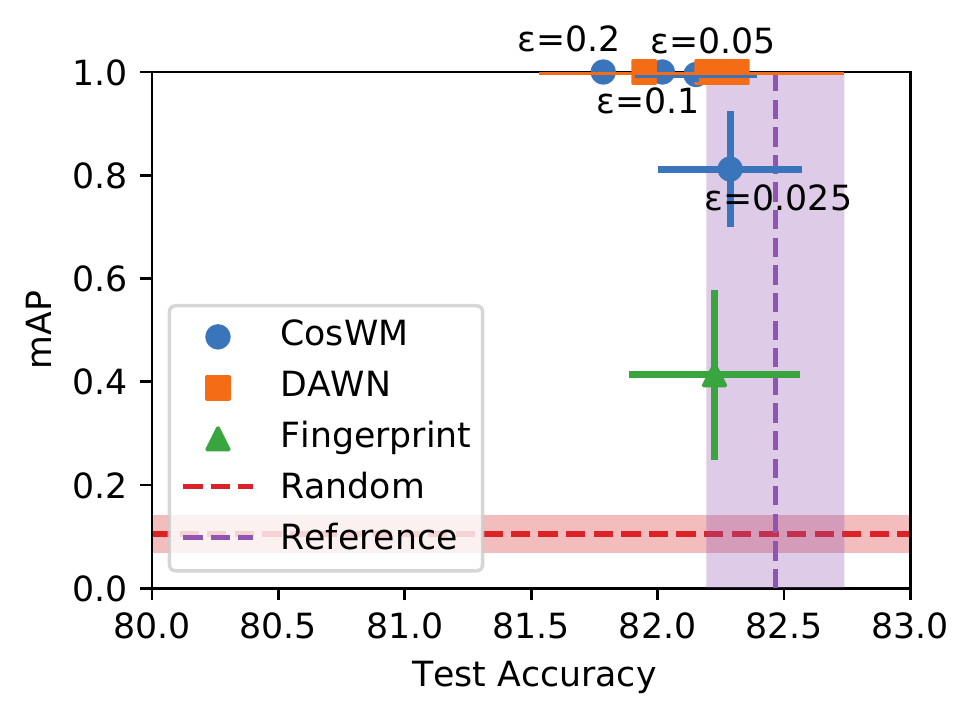}
    }
    \subfigure[4-model Ensemble]{
    \includegraphics[width = 0.23\textwidth]{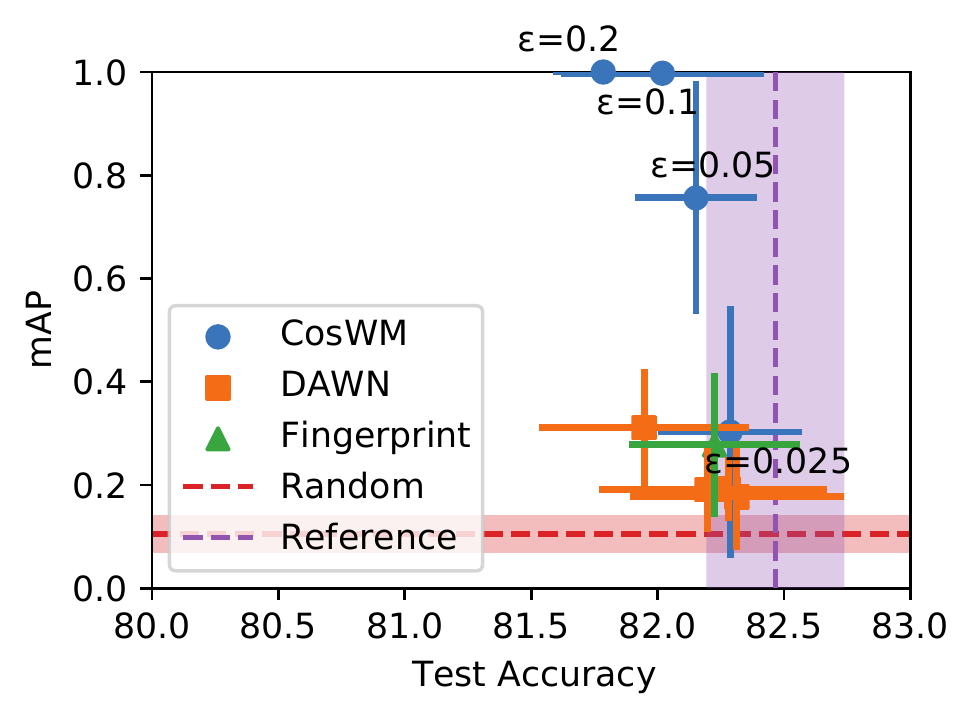}
    }
    \subfigure[8-model Ensemble]{
    \includegraphics[width = 0.23\textwidth]{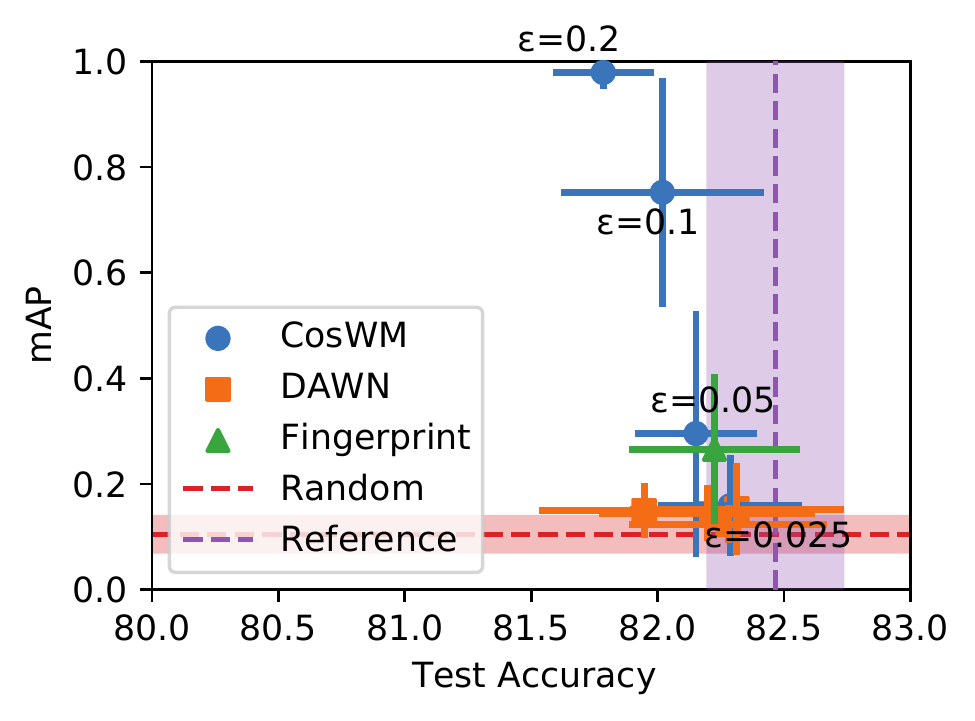}
    }
    \caption{mAP of \wmname{}, DAWN, and Fingerprinting under different parameter values as a function of accuracy of the watermarked model.
    Each watermarked model is part of an ensemble of teacher models and is the only watermarked model within that ensemble.}
    \label{fig:map_vs_acc_1wm}
\end{figure*}

\begin{figure*}[t]%
    \centering
    \subfigure[2-model Ensemble]{
    \includegraphics[width = 0.23\textwidth]{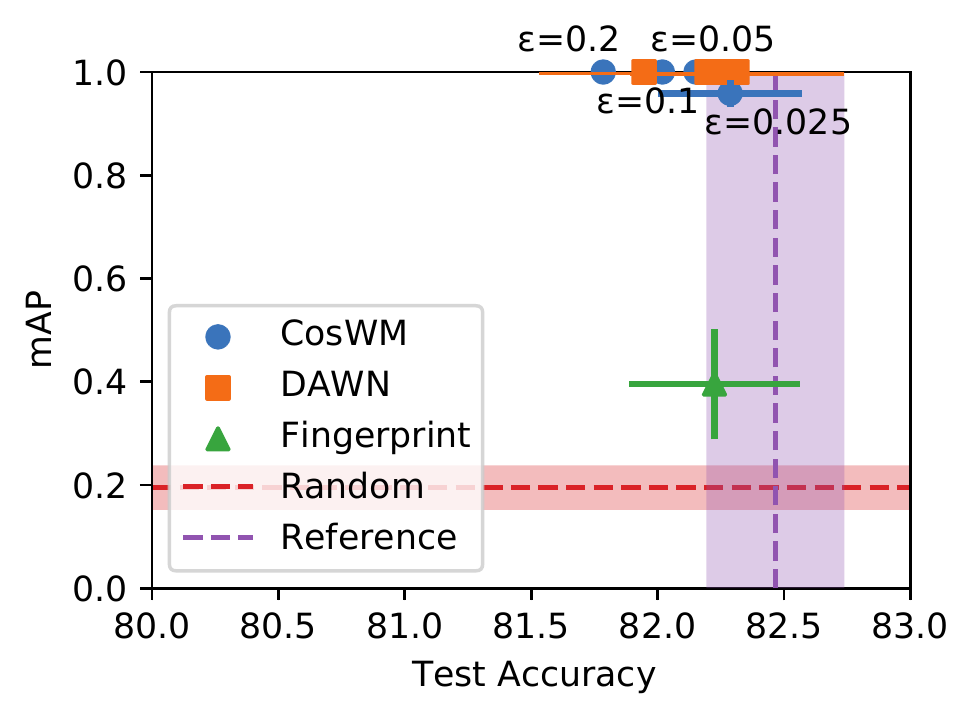}
    }
    \subfigure[4-model Ensemble]{
    \includegraphics[width = 0.23\textwidth]{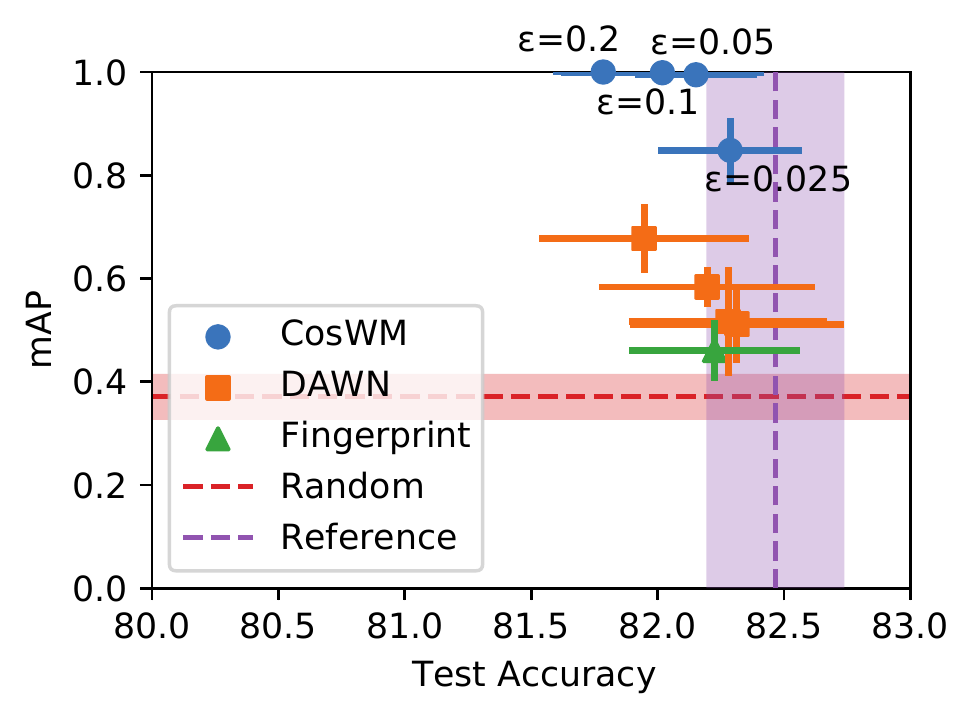}
    }
    \subfigure[6-model Ensemble]{
    \includegraphics[width = 0.23\textwidth]{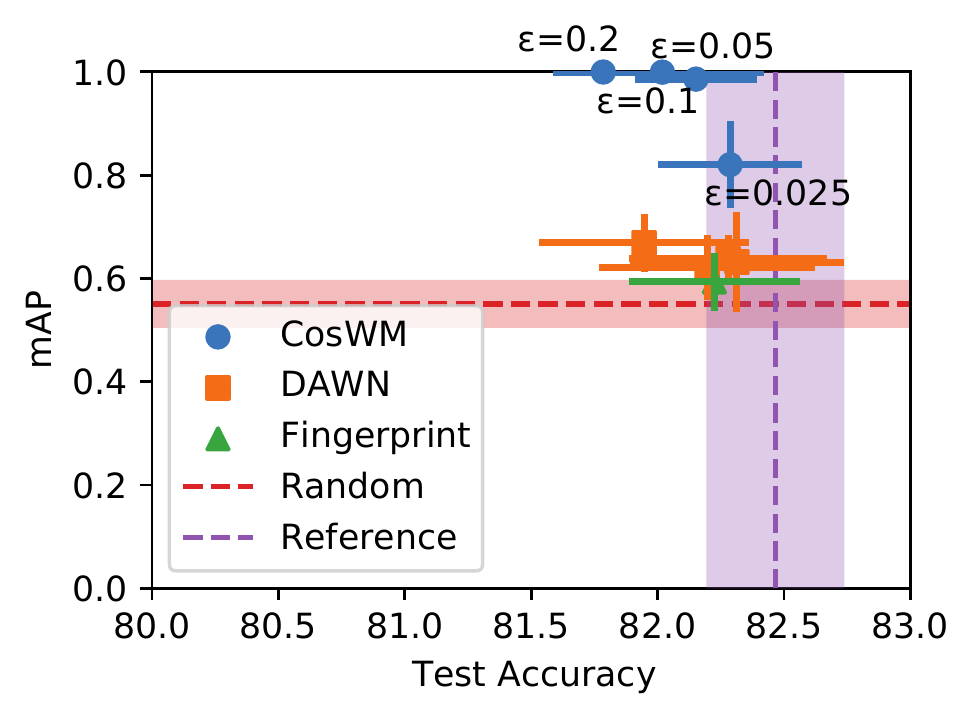}
    }
    \subfigure[8-model Ensemble]{
    \includegraphics[width = 0.23\textwidth]{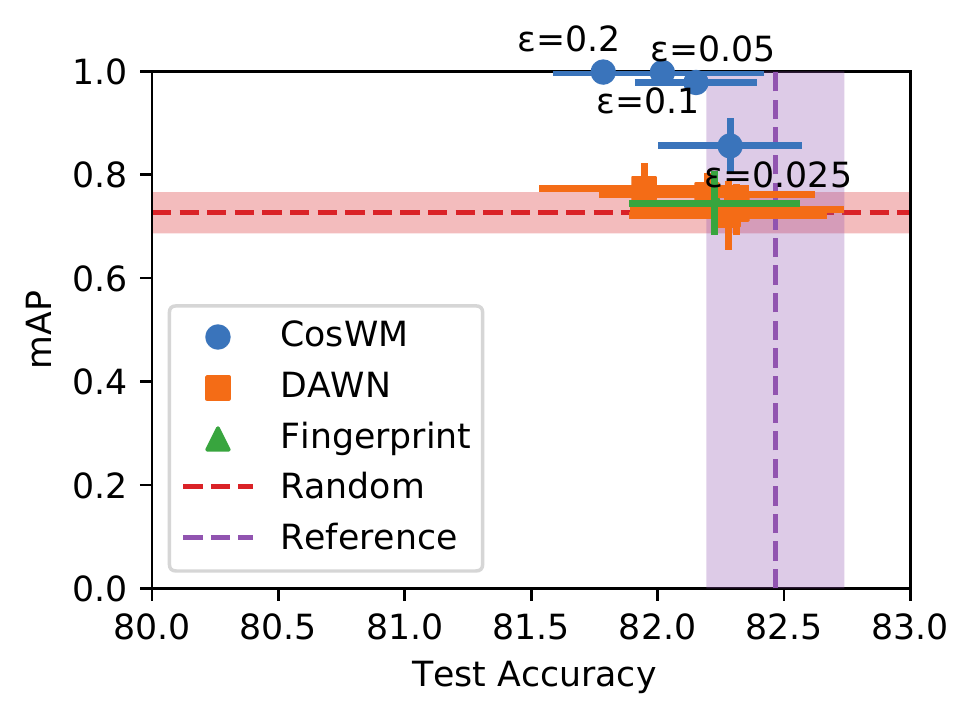}
    }
    \caption{mAP of \wmname{}, DAWN and Fingerprinting under different parameter values as a function of accuracy of the watermarked model.
    Each watermarked model is part of an ensemble of teacher models where every model is watermarked.}
    \label{fig:map_vs_acc_awm}
\end{figure*}

\subsection{Protection with a Single Watermarked Teacher}
\label{sec:single}

To compare \wmname{} with DAWN and Fingerprinting in protecting watermarked teacher models, we set up a series of ranking tasks with different ensemble size $N$.
In each ranking task, we have $10$ student models distilled from the watermarked teacher model (positive student models) and $100$ student models not distilled from the watermarked teacher model (negative students).
For different methods, we use their own watermark signal strength values to rank those $110$ students.
Specifically, we use $P_{snr}$ defined in Equation~\eqref{psnr} for \wmname{}, the fraction of matching watermark predictions for DAWN, and the fraction of matching fingerprint predictions for Fingerprinting.
To evaluate the performance of all three methods, we compute the average precision (AP) for each ranking task and repeat each task for all $10$ watermarked models to calculate the mean average precision (mAP) and its standard deviation.

For all three methods, we use the first half of the training data to train $10$ unwatermarked teacher models with different initialization and $10$ teacher models with different watermark or fingerprint keys. We tune the parameters to make sure that the accuracy losses of all watermarked teacher models are within $1\%$ of the averaged accuracy of the unwatermarked teacher models. 
To create a ranking task with $110$ student models, for every watermarked teacher model we assemble it with $N-1$ randomly selected unwatermarked teacher models to distill $10$ student models with different initialization. 
In addition, we train $10$ independent student models with ground truth labels and different initialization. 
The above process gives us $10$ positive and $100$ negative student models for each watermarked teacher model.

For \wmname{}, all watermarked teacher models have the same frequency $f_w=30.0$ and target class $i^*=0$, but have $10$ different unit random projection vectors $\mathbf{v^0}, \ldots, \mathbf{v^9}$.
We set $q_{min}$ to the median of all $\mathbf{q}_{i^*}$ values and vary the watermark amplitude $\varepsilon$ in $0.025$, $0.05$, $0.1$, and $0.2$.
For DAWN, we vary the fraction of watermarked input $\tau$ in $0.0005$, $0.001$, $0.002$, and $0.005$.
For Fingerprinting, we generate one single set of fingerprint input per teacher model using parameter $\varepsilon_{fp} = 0.095$, which results in a large enough set of fingerprints with the best conferrability score.
During extraction, all fingerprint input and labels are tested on a model to compute the fingerprint strength value for ranking.

Figure~\ref{fig:map_vs_acc_1wm} shows the results on the CIFAR10 data set for different ensemble size values $N=1,2,4,8$.
In this figure, we plot the mAP scores as a function of the average teacher model accuracy.
As a baseline, we add a \textit{Random} method that ranks all student models randomly, whose mAP and standard deviation is represented by the horizontal red dashed line.
The vertical purple dashed line shows the average and standard deviation of the accuracy of the unwatermarked teacher models.

As shown for both CosWM and DAWN in Figure~\ref{fig:map_vs_acc_1wm}, a stronger watermark will negatively affect model performance.
A model owner must consider this effect when tuning the watermark.

When the ensemble size is small, i.e., $1$ and $2$, the best mAP of \wmname{} and DAWN are generally comparable, and are both significantly larger than that of Fingerprinting, as shown in Figures~\ref{fig:map_vs_acc_1wm}(a) and (b).
When the ensemble size is larger, i.e., $4$ and $8$, the best mAP of \wmname{} is significantly larger than that of DAWN and Fingerprinting, whose watermarked model is consistently outnumbered, as shown in Figures~\ref{fig:map_vs_acc_1wm}(c) and~\ref{fig:map_vs_acc_1wm}(d).
This superior performance of \wmname{} is due to our watermark signal design that is robust to ensemble distillation.
When the ensemble size increases, \wmname{} needs a larger $\varepsilon$ to keep mAP high.
This confirms the discussion in Remark~\ref{rem:remark2} in Section~\ref{sec:theoretical}.   

In addition, we observe a trade-off between ensemble size and mAP when choosing different signal amplitude  $\varepsilon$ for \wmname{} and different fraction $\tau$ of watermarked input for DAWN. We analyze the effect of the amplitude parameter $\varepsilon$ in more details in Appendix~\ref{appx:ens_size_graphs}. 

\nop{Moreover, we show the results on FMNIST with this experimental setting in Appendix~\ref{appx:extra}.}

\subsection{Protection with Multiple Watermarked Teachers}
\label{sec:multiple}

We compare \wmname{} with DAWN and Fingerprinting in assembling only watermarked teacher models to train a student model by undertaking another series of ranking tasks for different ensemble sizes $N$.
We train 10 watermarked teacher models as described in Section~\ref{sec:single}, and assemble $10$ sets of teacher models for each ensemble size in a round-robin manner.
The training of all other models and watermark settings in this experiment remain exactly the same as described in Subsection~\ref{sec:single}.
As a result, in an $N$-ensemble teacher model experiment, each ranking task associated to a teacher model has $10N$ positive and $110 - 10N$ negative student models.

Figure~\ref{fig:map_vs_acc_awm} shows the results on the CIFAR10 data set for different ensemble size values, i.e., $N=2, 4, 6, 8$. It is plotted in the same way as in Figure~\ref{fig:map_vs_acc_1wm}, described in Section~\ref{sec:single}. Similar to the previous experiments, we also add the \textit{Random} baseline to provide a lower bound performance. 

The accuracy losses of all watermarked models are within $1\%$ of the average accuracy of all unwatermarked teacher models.
When the ensemble size is small, i.e., $N=2$, the best mAP of \wmname{} and DAWN are generally comparable to each other, and are both significantly larger than that of Fingerprinting, as shown in Figure~\ref{fig:map_vs_acc_awm}(a).
However, \wmname{} has a significantly higher best mAP for larger ensemble sizes, i.e., $N=4, 6, 8$, as shown in Figures~\ref{fig:map_vs_acc_awm}(b), (c) and (d).
This shows that \wmname{} watermarks are generally unaffected by other watermarks in a teacher ensemble and confirms the possibility of detecting watermarks if the ensemble features multiple watermarked teacher models as discussed in Section~\ref{extension_multi}.

We also observe a similar trade-off between ensemble size and mAP when choosing different signal amplitude  $\varepsilon$ for \wmname{} and different fraction $\tau$ of watermarked input for DAWN.
This is further analyzed in Appendix~\ref{appx:ens_size_graphs}.

\nop{Moreover, we show the results on FMNIST with this experimental setting in Appendix~\ref{appx:extra}.}


\section{Conclusion}
\label{sec:conclusion}
In this paper, we tackle a novel problem of protecting neural network models against ensemble distillation.
We propose \wmname{}, an effective method relying on a signal embedded into all output of a watermarked model, and therefore transferring the signal to training data for student models.
We prove that the embedded signal in \wmname{} is strong in a well-trained student model by providing lower and upper bounds on the watermark strength metric.
In addition, \wmname{} can be extended to identify student models distilled from an ensemble featuring multiple watermarked models.
Our extensive experiments demonstrate the superior performance of \wmname{} in providing models defense from ensemble distillation.

\bibliography{egbib_aaai_clean}

\clearpage
\appendix
\begin{center}
{\Large \bf Appendix}
\end{center}
\setcounter{section}{0}
In this appendix, we provide the proofs for Theorem 1 in Subsection~\ref{proof} and a lemma on properties of the modified softmax outputs in Subsection~\ref{proof-lemma}. 
In addition, we show more extensive experimental results in Subsection \ref{supexp}.

\section{Proofs}
\subsection{Proof of Theorem \ref{bound}}\label{proof}
\begin{proof}
We first prove the left inequality of \eqref{eq:bound}. By using the triangular inequality and the fact that $\mathbf{s}^*(\bar{D})$
is the best sinusoidal fit for $\bar{D}$, we have
\begin{align}
    \chi^2_f(D) & \geq \sum_{l=1}^L \left[ \mathbf{\bar q}^l_{i^*} -[\mathbf{s}^*(D)]_l \right]^2 - \sum_{l=1}^L \left[ \mathbf{\bar q}^l_{i^*} -\mathbf{q}^l_{i^*} \right]^2 \\
    &  \geq \sum_{l=1}^L \left[ \mathbf{\bar q}^l_{i^*} -[\mathbf{s}^*(\bar{D})]_l \right]^2 - L_{SE} \\
   &  = {\chi}^2_f(\bar{D})  - L_{SE}.
\end{align}
Combining the above inequality with equation (\ref{eq:LSdef}) we obtain the left inequality of \eqref{eq:bound}.

Next we prove the right inequality of \eqref{eq:bound}. Since $\mathbf{s}^*(\hat{D})$ and $\mathbf{s}^*(\widetilde{D})$ are points from the graph of sinusoidal functions with same frequency $f$, then $\frac{1}{N}\mathbf{s}^*(\hat{D}) +  \frac{N-1}{N}\mathbf{s}^*(\widetilde{D})$ is also a set of points from the graph of a sinusoidal function.
As $\mathbf{s}^*(D)$ is the best sinusoidal fit for $D$, we have
\begin{align}
   \chi^2_f & (D) = \sum_{l=1}^L \left[ \mathbf{q}^l_{i^*} - [{\mathbf{s}}^*(D)]_l \right]^2 \\
& \leq \sum_{l=1}^L \left[ \mathbf{q}^l_{i^*} - \left[\frac{1}{N}\mathbf{s}^*(\hat{D}) +  \frac{N-1}{N}\mathbf{s}^*(\widetilde{D})\right]_l \right]^2. 
\end{align}
By using the above inequality and triangular inequality, we get the following:

\begin{align}
    &\chi^2_f (D)  \leq \sum_{l=1}^L \left[ \mathbf{q}^l_{i^*} {-} \frac{1}{N}[ \mathbf{s}^*(\hat{D})]_l {-}  \frac{N-1}{N}[\mathbf{s}^*(\widetilde{D})]_l \right]^2\\
    & \leq \sum_{l=1}^L \left[ \mathbf{q}^l_{i^*} - \frac{1}{N}\hat{\mathbf{q}}^l_{i^*} - \frac{N-1}{N}\mathbf{\widetilde q}^l_{i^*} \right]^2 +\\
    & \quad {\sum_{l=1}^L} \left[ \frac{1}{N}\hat{\mathbf{q}}^l_{i^*} {+} \frac{N-1}{N}\mathbf{\widetilde q}^l_{i^*} {-} \frac{1}{N}[ \mathbf{s}^*(\hat{D})]_l {-} \frac{N-1}{N}[ \mathbf{s}^*(\widetilde{D})]_l \right]^2\nonumber\\
& \leq L_{SE} + \\
& \sum_{l=1}^L \left[ \frac{1}{N}\left[\hat{\mathbf{q}}^l_{i^*} - [ \mathbf{s}^*(\hat{D})]_l\right]+ \frac{N-1}{N}\left[\mathbf{\widetilde q}^l_{i^*}  - [ \mathbf{s}^*(\widetilde{D})]_l\right] \right]^2\nonumber\\
    & \leq L_{SE} +  \frac{1}{N^2} {\chi}^2_f(\hat{D}) + \left(\frac{N-1}{N}\right)^2{\chi}^2_f((\widetilde{D}).
\end{align}
Combining the above inequality with equation (\ref{eq:LSdef}) we obtain the right inequality of \eqref{eq:bound}.

\end{proof}

\subsection{Lemma 1: Modified Softmax Properties}
\label{proof-lemma}
\begin{lem}
\label{lemma-1}
Let $\mathbf{q}$ be the softmax output of a model $R$, then the modified softmax $\hat{\mathbf{q}}$, as defined in \eqref{eq:modified_softmax} satisfies Property \ref{prop:softmax}.
\end{lem}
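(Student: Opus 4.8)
The plan is to verify the two requirements of Property~\ref{prop:softmax} for $\hat{\mathbf{q}}$ directly from its definition in \eqref{eq:modified_softmax}. The one structural fact I would isolate at the outset is that the phase shift by $\pi$ makes the off-target signal the negative of the on-target signal: since $\cos(\theta+\pi) = -\cos\theta$, writing $c = \cos(f_w p(\mathbf{x}))$ gives $\mathbf{a}_{i^*}(\mathbf{x}) = c$ and $\mathbf{a}_i(\mathbf{x}) = -c$ for every $i \neq i^*$. In particular $-1 \le \mathbf{a}_i(\mathbf{x}) \le 1$, so $0 \le 1 + \mathbf{a}_i(\mathbf{x}) \le 2$ for all $i$; these elementary bounds drive the nonnegativity argument.

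For the second requirement, $\sum_{i=1}^m \hat{\mathbf{q}}_i = 1$, I would sum the numerators over $i$ and show the total equals $1 + 2\varepsilon$, so that dividing by the common denominator $1 + 2\varepsilon$ returns $1$. The $\mathbf{q}_i$ contributions sum to $1$ by the hypothesis that $\mathbf{q}$ is itself a softmax (Property~\ref{prop:softmax} applied to $\mathbf{q}$). The $\varepsilon$ contributions are $\varepsilon(1+c)$ from the target class and $(m-1)\cdot \frac{\varepsilon(1-c)}{m-1} = \varepsilon(1-c)$ from the $m-1$ off-target classes combined; these add to $2\varepsilon$, with the $c$ terms cancelling exactly. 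This cancellation is precisely what the $\pi$ phase shift and the $\frac{1}{m-1}$ splitting factor were engineered to produce, and it is the only step that requires a small computation.

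For the first requirement, I would handle the two inequalities separately. Nonnegativity of each $\hat{\mathbf{q}}_i$ is immediate: the numerators are sums of the nonnegative quantities $\mathbf{q}_i \ge 0$ and $\varepsilon(1+\mathbf{a}_i(\mathbf{x})) \ge 0$ (using $\varepsilon > 0$ together with $1 + \mathbf{a}_i(\mathbf{x}) \ge 0$), divided by the positive denominator $1 + 2\varepsilon$. The upper bound $\hat{\mathbf{q}}_i \le 1$ then follows for free from the two facts just established, since a collection of nonnegative numbers summing to $1$ must each lie in $[0,1]$; alternatively, one can bound each numerator directly using $\mathbf{q}_i \le 1$, $1+\mathbf{a}_i(\mathbf{x}) \le 2$, and $m \ge 2$.

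I do not anticipate any real obstacle here: the statement is a short verification rather than a deep argument. The only point deserving care is the sum computation, where one must track that there are exactly $m-1$ off-target terms so that the $\frac{1}{m-1}$ factors combine correctly; everything else follows from the softmax hypothesis on $\mathbf{q}$ and the uniform normalization by $1 + 2\varepsilon$.
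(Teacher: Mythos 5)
Your proposal is correct and follows essentially the same route as the paper's proof: elementary bounds $0 \le 1 + \mathbf{a}_i(\mathbf{x}) \le 2$ for nonnegativity, and the cancellation $\mathbf{a}_{i^*}(\mathbf{x}) + \mathbf{a}_j(\mathbf{x}) = 0$ (from the $\pi$ phase shift) combined with $\sum_i \mathbf{q}_i = 1$ to show the numerators total $1+2\varepsilon$. Your observation that the upper bound $\hat{\mathbf{q}}_i \le 1$ follows for free from nonnegativity plus the unit sum is a slight streamlining over the paper, which instead bounds each numerator directly by $1+2\varepsilon$ using $m \ge 2$, but this is a cosmetic difference rather than a different argument.
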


\begin{proof}
    By the definition of softmax (\ref{eq:softmax}), for all $i$ we have 
    \begin{gather}
        0 \leq \mathbf{q}_i \leq 1,\\
        -1 \leq \mathbf{a}_i(\mathbf{x}) \leq 1.
    \end{gather}
    Therefore, when $i = i^*$, we have
    \begin{equation}
        0 \leq \mathbf{q}_i + \varepsilon (1 + \mathbf{a}_i(\mathbf{x})) \leq 1 + 2 \varepsilon,
    \end{equation}
    and then 
   \begin{equation}
      0 \leq \frac{\mathbf{q}_i + \varepsilon (1 + \mathbf{a}_i(\mathbf{x}))}{1 + 2 \varepsilon} \leq 1. 
   \end{equation}
    When $i \neq i^*$, since $m\geq 2$, we have
    \begin{equation}
         0 \leq \mathbf{q}_i + \frac{\varepsilon (1 + \mathbf{a}_i(\mathbf{x}))}{m - 1} \leq 1 + \frac{2 \varepsilon}{m-1} \leq 1 + 2 \varepsilon,
    \end{equation}
    and then
    \begin{equation}
       0 \leq \frac{\mathbf{q}_i + \frac{\varepsilon (1 + \mathbf{a}_i(\mathbf{x}))}{m - 1}}{1 + 2 \varepsilon} \leq 1. 
    \end{equation}
    Thus, $\hat{\mathbf{q}}$ satisfies clause 1 of Property \ref{prop:softmax}.
    
    To prove clause 2 of Property \ref{prop:softmax}, we use the fact that $\mathbf{a}_{i^*} + \mathbf{a}_{j \neq i^*} = 0$ and obtain
    \begin{align}
        & \sum_i^m \hat{\mathbf{q}}_i = \frac{\mathbf{q}_{i^*} + \varepsilon(1 + \mathbf{a}_{i^*})}{1 + 2\varepsilon} + \sum_{j \neq i^*} \frac{\mathbf{q}_{j} + \frac{\varepsilon(1 + \mathbf{a}_j)}{m-1}}{1 + 2\varepsilon}\\
        & = \left(\sum_{j=1}^m \frac{\mathbf{q}_i}{1 + 2 \varepsilon}\right) + \left(\sum_{j\neq i^*} \frac{\varepsilon (1 + \mathbf{a}_{i^*} + 1 + \mathbf{a}_{j})}{(m-1)(1 + 2 \varepsilon)}\right)\\
        & = \frac{1}{1 + 2 \varepsilon}  + \frac{2 \varepsilon}{1 + 2 \varepsilon}\\
        & = 1.
    \end{align}
    Thus, clause  2 of Property \ref{prop:softmax} is also satisfied.
\end{proof}

\section{More Experiment Results}
\label{supexp}

\subsection{FMNIST Results}
\label{appx:fmnist}

We include here the results of the experiments described in Sections~\ref{sec:single} and~\ref{sec:multiple} for models trained on the FMNIST data set.
The FMNIST data set contains fashion item images in 10 classes. It consists of a training set of 60,000 examples 
and a test set of 10,000 examples.
We partition all the training examples randomly into a teacher half and a student half just as we did with CIFAR10.

Figure \ref{fig:map_vs_acc_1wm_fmnist} shows the result of the single watermark experiment described in section~\ref{sec:single} using the FMNIST data set.
The figure plots the mAP of the different watermarks against the test accuracy of the watermarked model.
All of the experiment parameters and baselines are the same as shown in the equivalent CIFAR10 results shown in  Figure~\ref{fig:map_vs_acc_1wm}.

The accuracy losses of all the watermarked models are on average negligible when compared to equivalent unwatermarked models.
Similar to the CIFAR10 results, the mAP of \wmname{} and DAWN watermarks are comparable for smaller ensemble sizes, as shown in Figures~\ref{fig:map_vs_acc_1wm}(a) and (b), and the \wmname{} watermarks are significantly stronger for larger ensembles, as shown in Figures~\ref{fig:map_vs_acc_1wm}(c) and (d).
This solidifies our claim that the design of \wmname{} makes it the most robust method against ensemble distillation.

\begin{figure*}[t]%
    \centering
    \subfigure[Single Teacher]{
    \includegraphics[width = 0.226\textwidth]{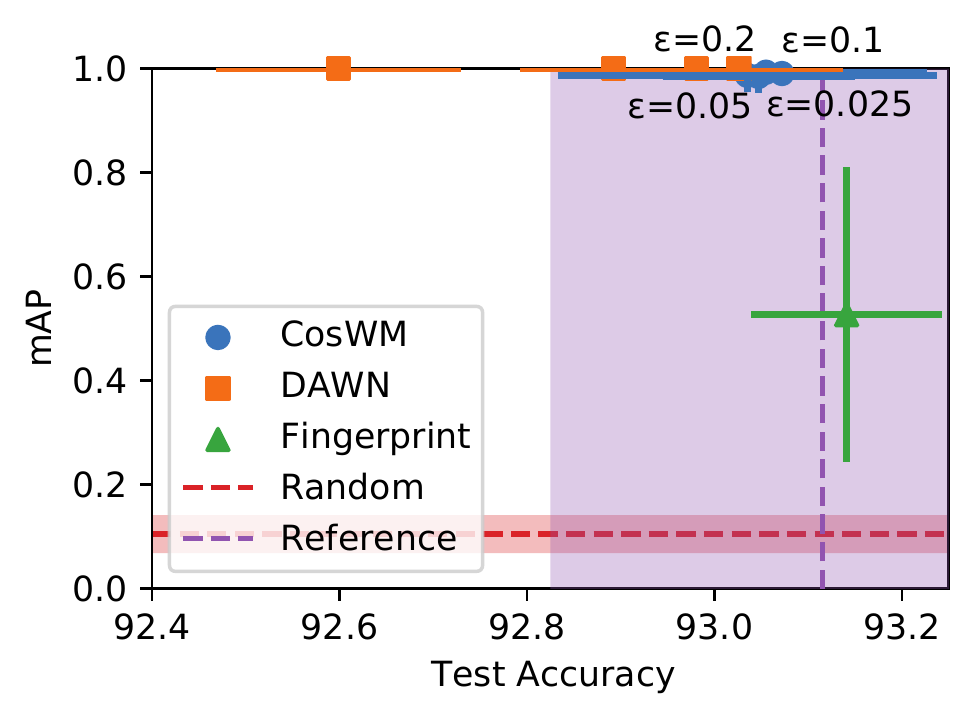}
    }
    \subfigure[2-model Ensemble]{
    \includegraphics[width = 0.226\textwidth]{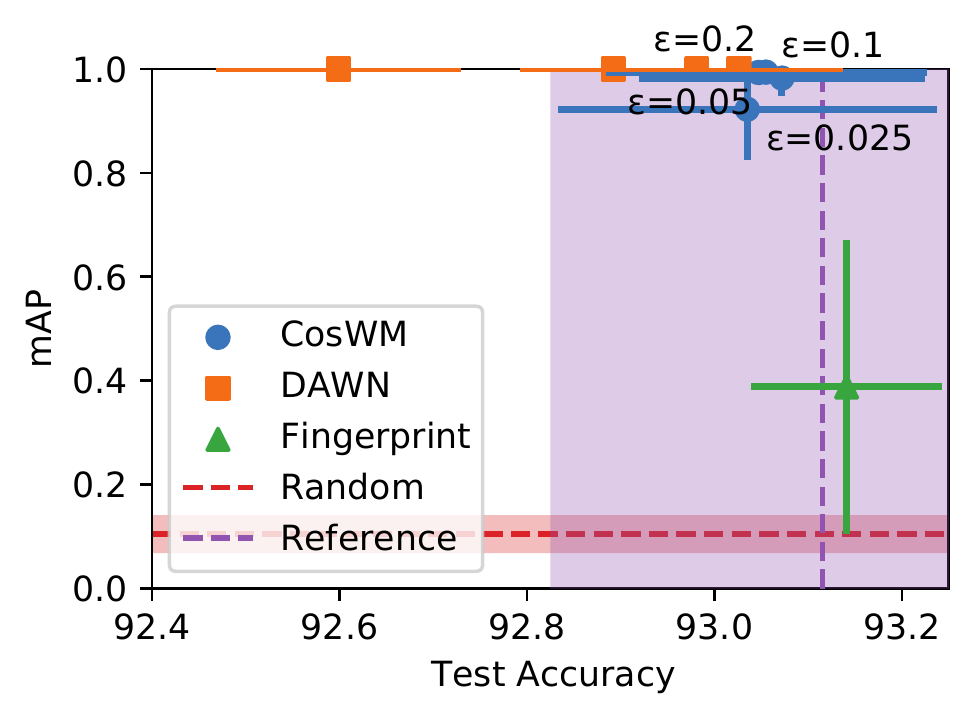}
    }
    \subfigure[4-model Ensemble]{
    \includegraphics[width = 0.226\textwidth]{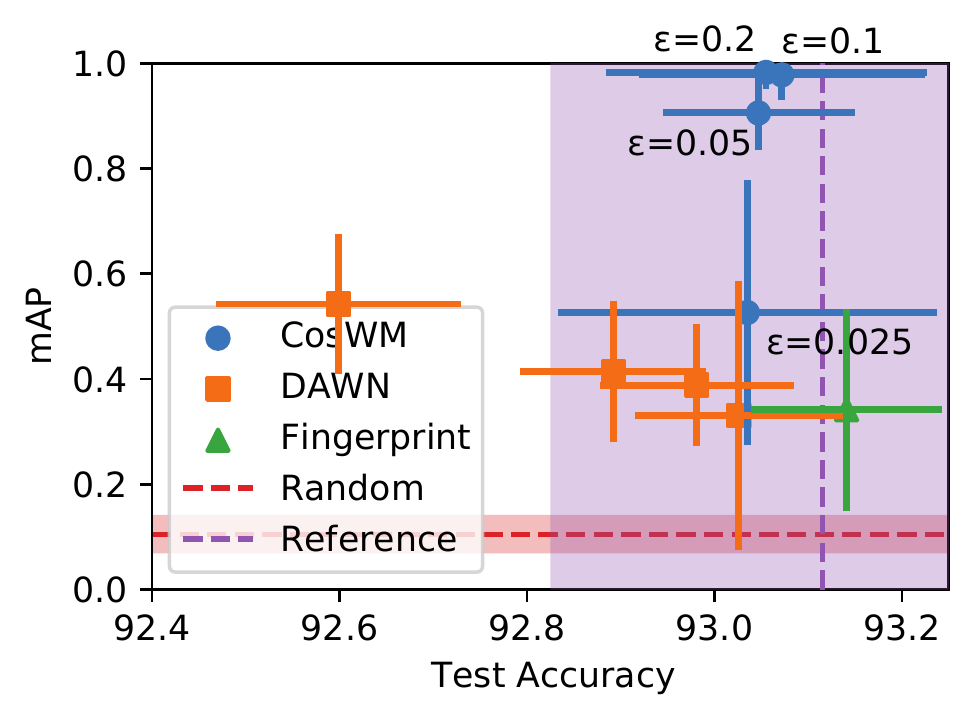}
    }
    \subfigure[8-model Ensemble]{
    \includegraphics[width = 0.226\textwidth]{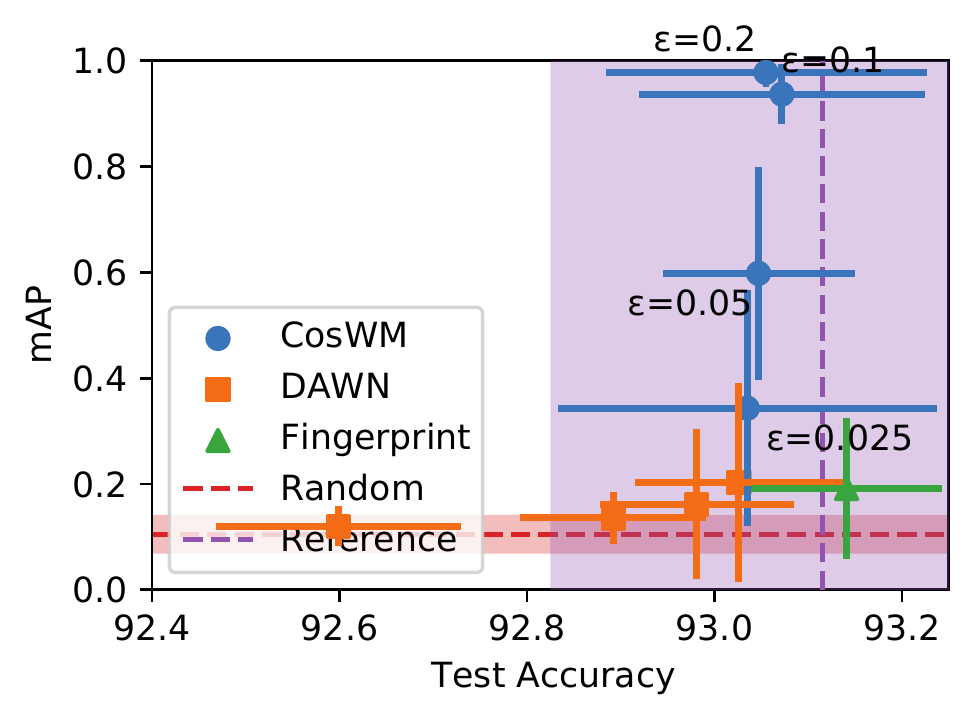}
    }
    \caption{mAP of \wmname{}, DAWN, and Fingerprinting under different parameter values as a function of accuracy of the watermarked model trained on the FMNIST data set.
    Each watermarked model is part of an ensemble of teacher models and is the only watermarked model within that ensemble.}
    \label{fig:map_vs_acc_1wm_fmnist}
\end{figure*}

\begin{figure*}[t]%
    \centering
    \subfigure[2-model Ensemble]{
    \includegraphics[width = 0.226\textwidth]{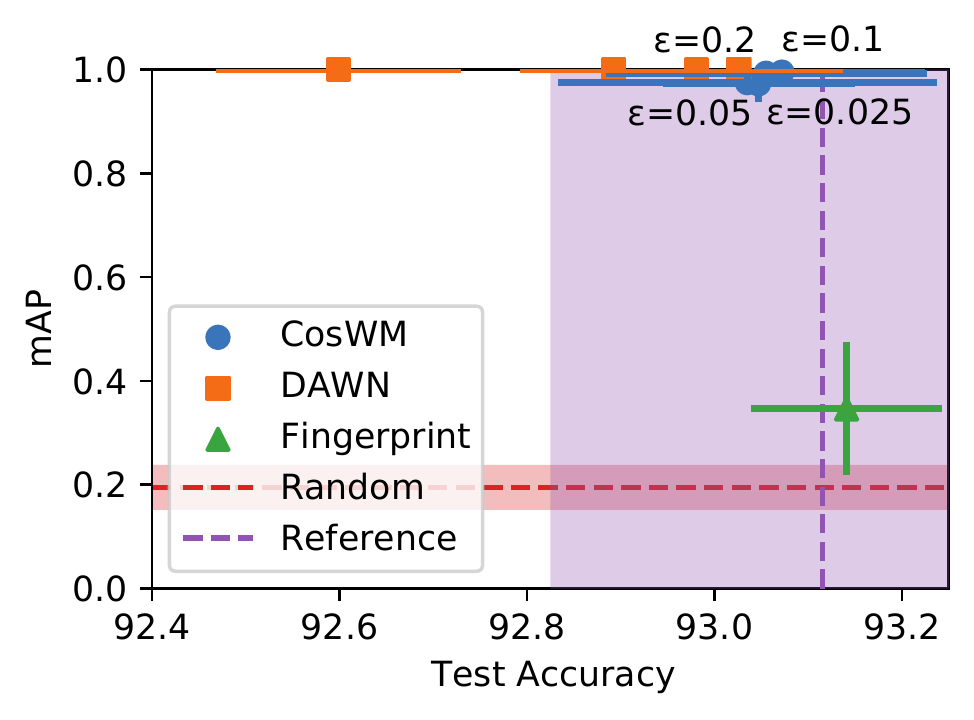}
    }
    \subfigure[4-model Ensemble]{
    \includegraphics[width = 0.226\textwidth]{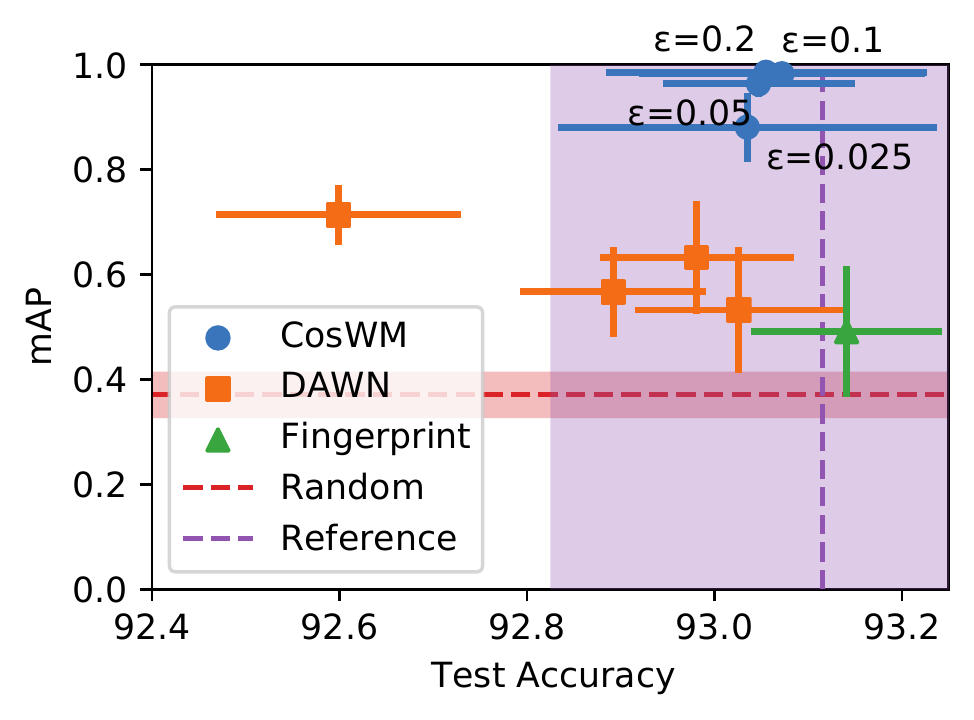}
    }
    \subfigure[6-model Ensemble]{
    \includegraphics[width = 0.226\textwidth]{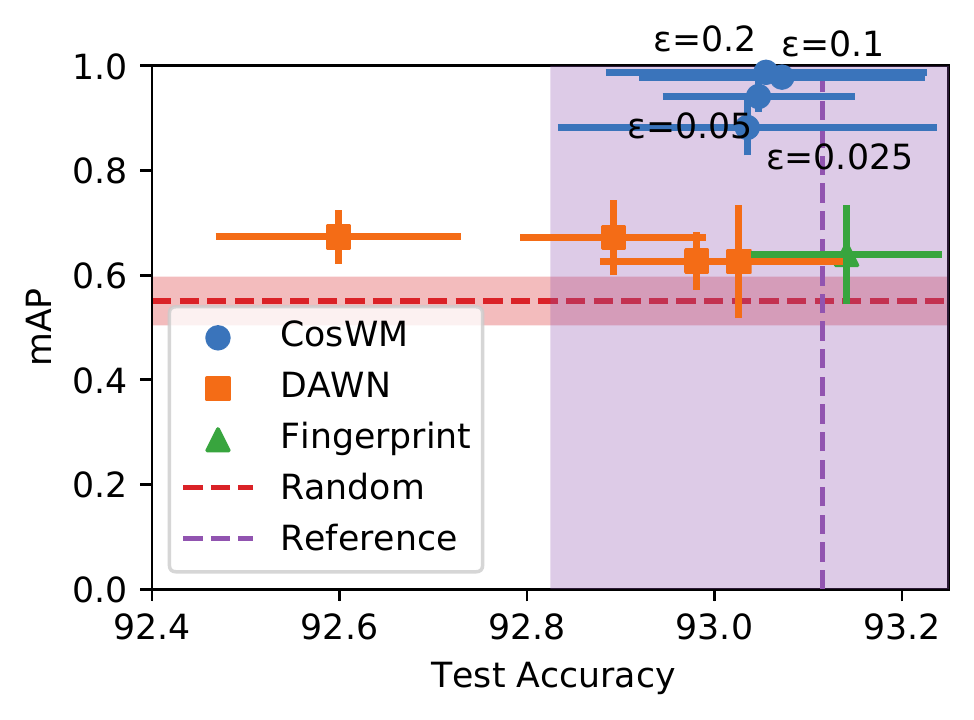}
    }
    \subfigure[8-model Ensemble]{
    \includegraphics[width = 0.226\textwidth]{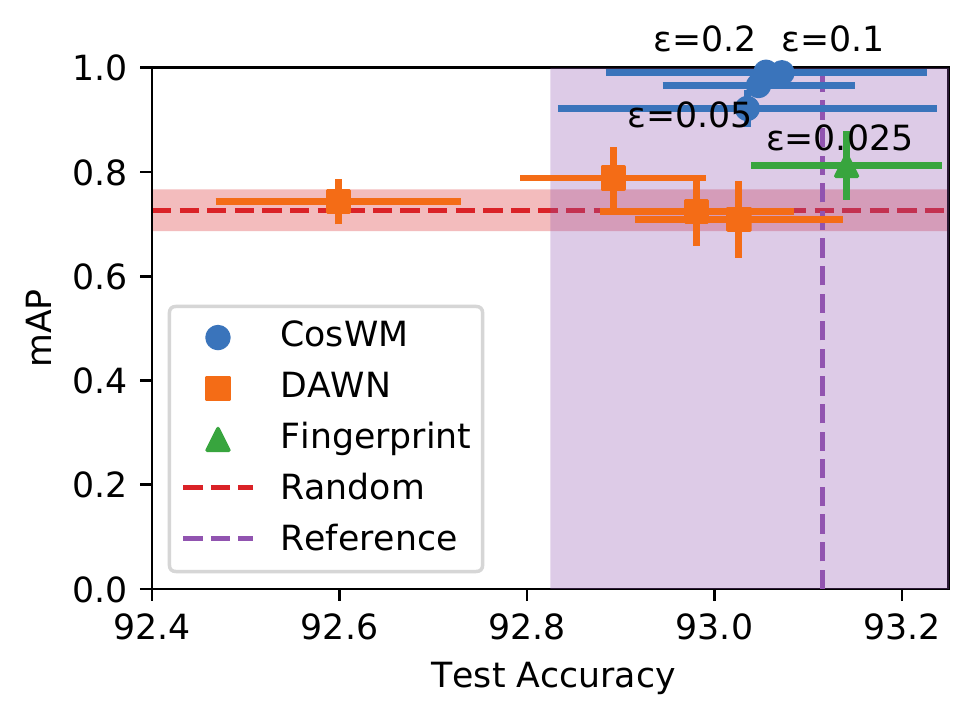}
    }
    \caption{mAP of \wmname{}, DAWN and Fingerprinting under different parameter values as a function of accuracy of the watermarked model trained on the FMNIST data set.
    Each watermarked model is part of an ensemble of teacher models where every model is watermarked.}
    \label{fig:map_vs_acc_awm_fmnist}
\end{figure*}

Figure \ref{fig:map_vs_acc_awm_fmnist} shows the result of the multiple watermark experiment described in section~\ref{sec:multiple} on the FMNIST data set.
Once more, the figure plots mAP against watermarked model test accuracy for many watermark methods.
All of the experiment parameters and baselines are the same as shown in the equivalent CIFAR10 results shown in Figure~\ref{fig:map_vs_acc_awm}.

The accuracy losses of all the watermarked models are still negligible compared to unwatermarked models.
Similar to the CIFAR10 results and the single model experiments The watermark produced by \wmname{} is significantly more robust than DAWN or Fingerprinting when used with larger ensemble sizes of $N=4, 6, 8$.
This solidifies our claims that \wmname{} has the ability to discern each distinct watermark featured in an ensemble from a student model.

\subsection{Amplitude Parameter Analysis}
\label{appx:ens_size_graphs}

To demonstrate the flexibility and tuning capabilities of \wmname{}, we analyze the effect of the watermark parameters to the identification performance.
We start with the amplitude parameter $\varepsilon$.
In general, because a larger amplitude signal results in a larger $P_{snr}$ value, a watermark signal with a larger amplitude should make it more likely to be successfully extracted.
Therefore we expect watermarks with higher $\varepsilon$ to have a higher mAP.

We can observe from Figures \ref{fig:map_vs_acc_1wm}-\ref{fig:map_vs_acc_awm_fmnist} that this is generally the case for both \wmname{} and DAWN.
A stronger signal usually infers higher mAP scores, at the cost of lower watermarked model performance.
Therefore, while setting the amplitude of the watermark signal, one must consider striking a balance between model performance and watermark performance.
A similar balancing act exists when setting the ratio of trigger inputs $\tau$ in a DAWN watermark.

\begin{figure}
    \centering
    \begin{minipage}[b]{0.3\textwidth}
        \includegraphics[width = 1.00\textwidth]{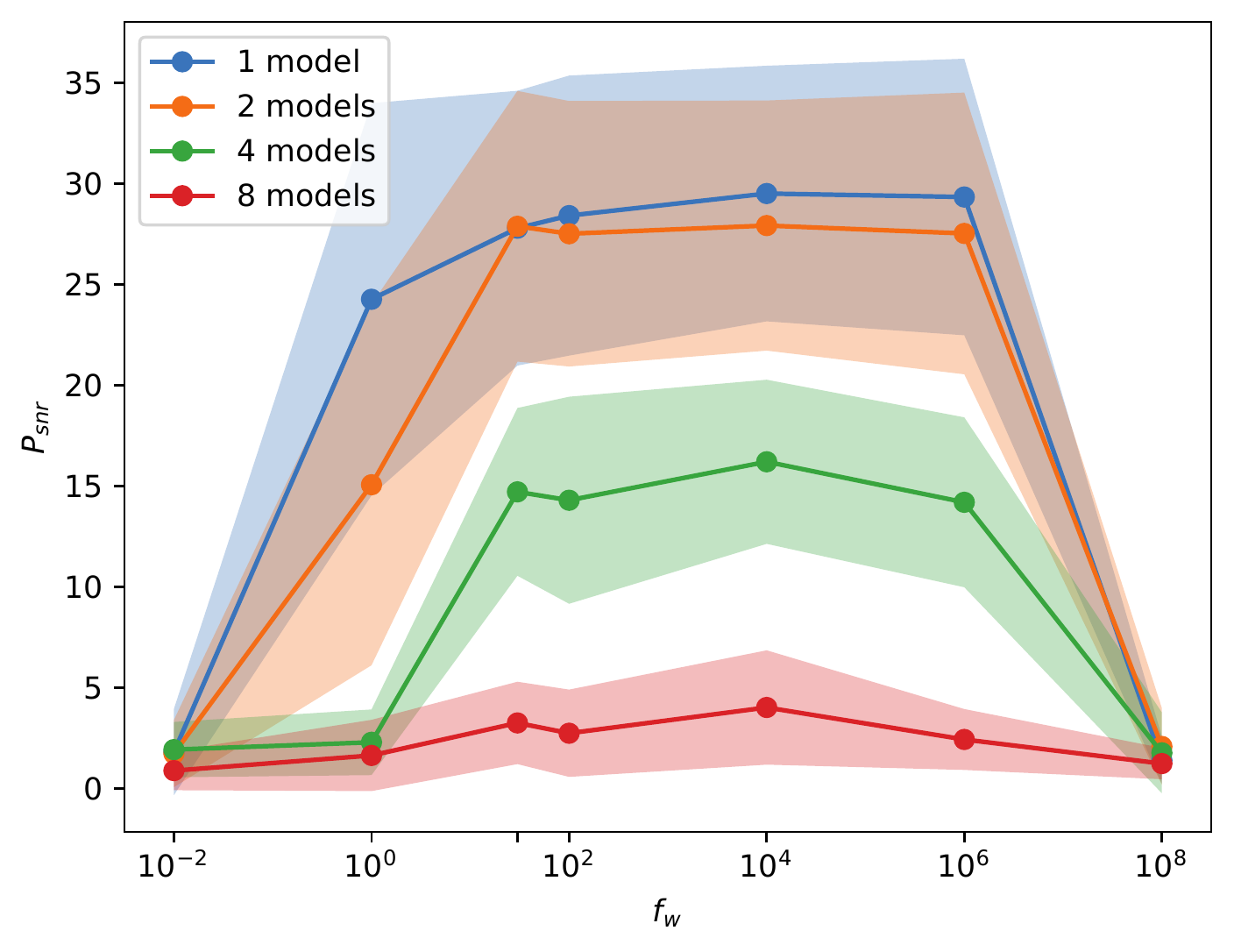}
        \caption{Watermark strength $P_{snr}$ as a function of signal frequency $f_w$ for different ensemble sizes.
        Each ensemble contains only one watermarked model.}
        \label{fig:k_study_v2}
    \end{minipage}
\end{figure}

\begin{figure*}[t]%
    \centering
    \subfigure[Single Teacher]{
    \includegraphics[width = 0.226\textwidth]{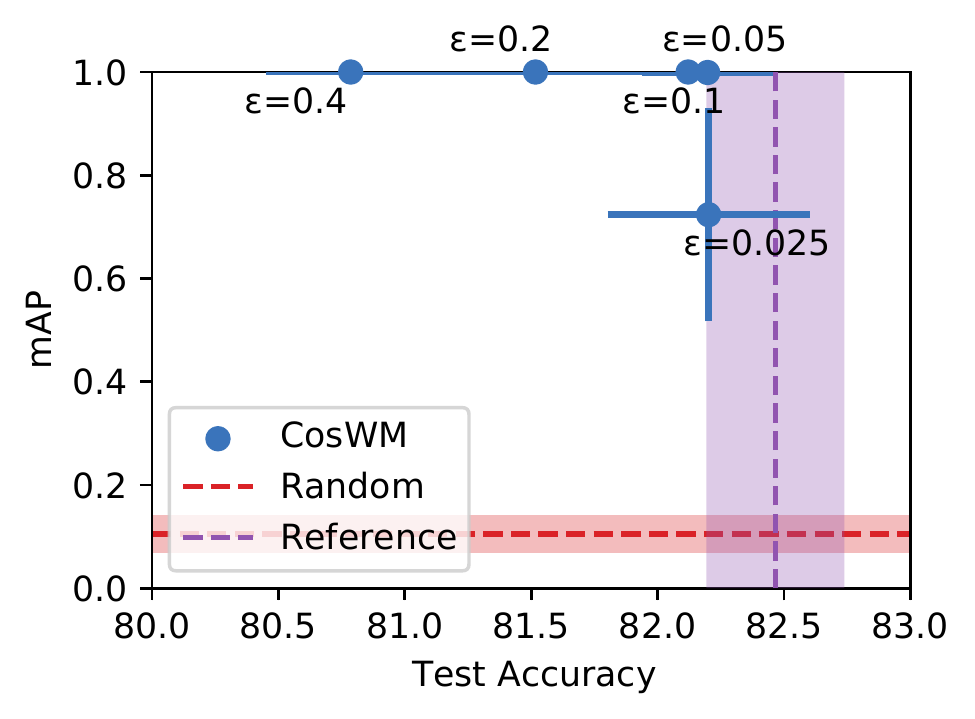}
    }
    \subfigure[2-model Ensemble]{
    \includegraphics[width = 0.226\textwidth]{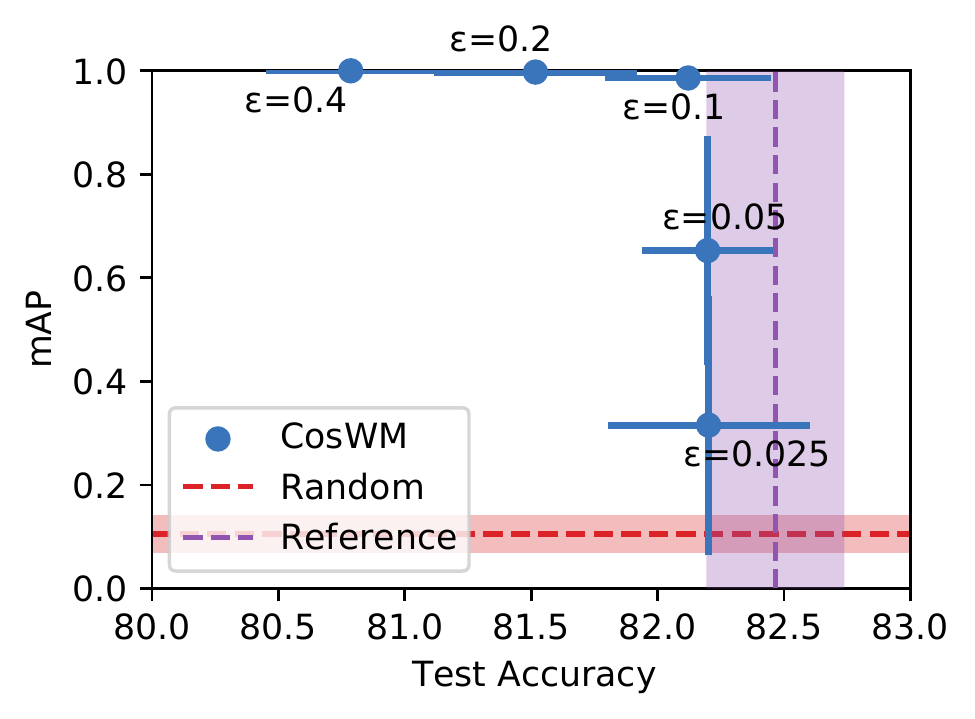}
    }
    \subfigure[4-model Ensemble]{
    \includegraphics[width = 0.226\textwidth]{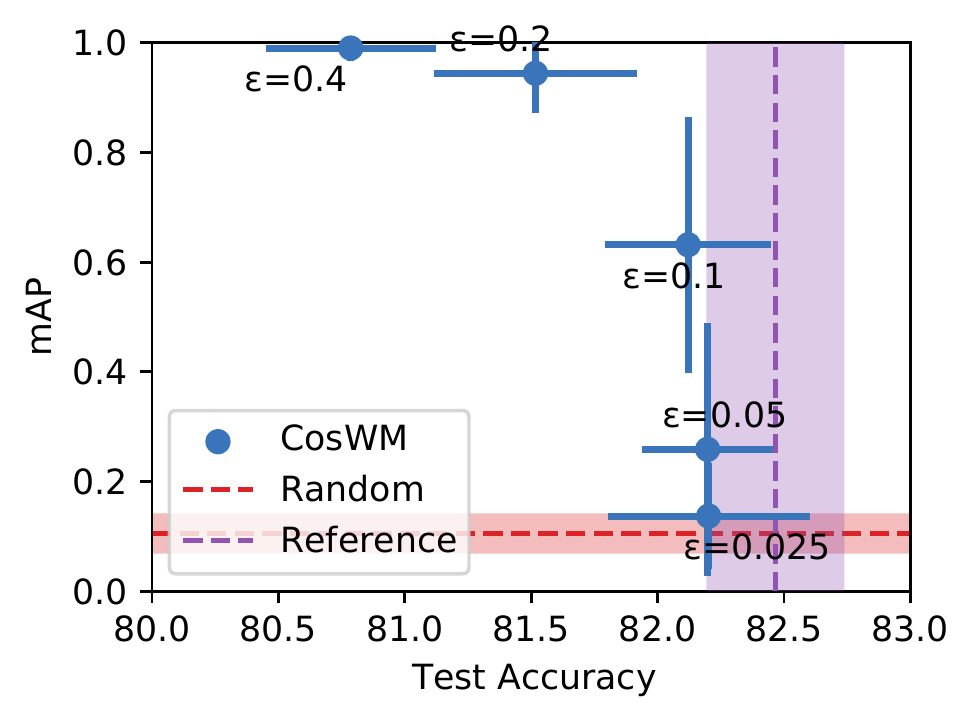}
    }
    \subfigure[8-model Ensemble]{
    \includegraphics[width = 0.226\textwidth]{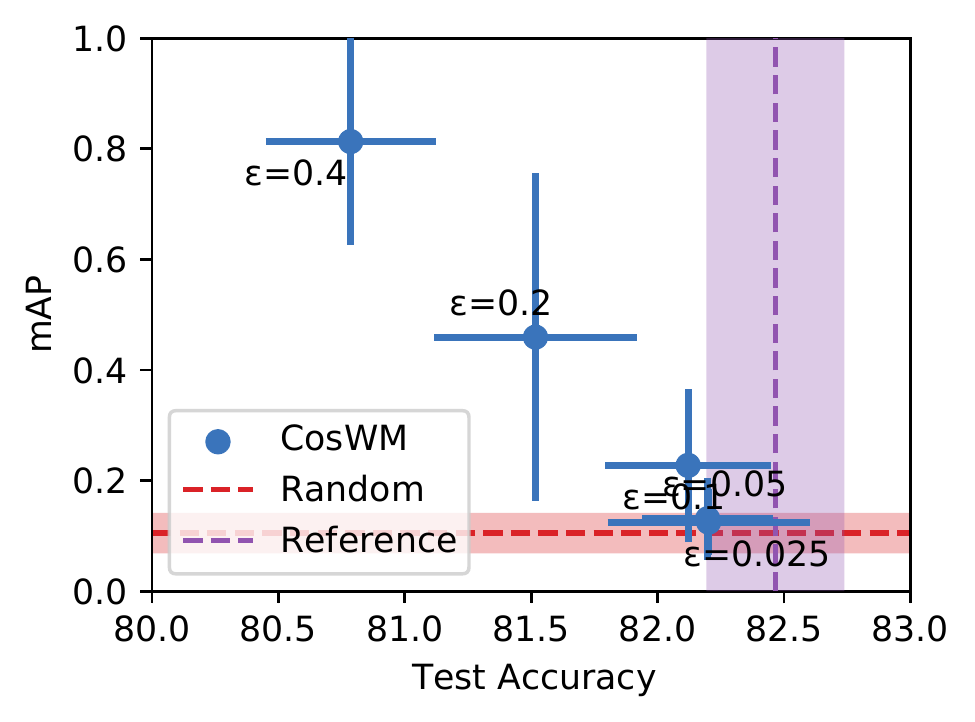}
    }
    \caption{mAP of \wmname{} under different parameter values as a function of accuracy of the watermarked model.
    All models are trained on the CIFAR10 data set.
    Each watermarked model is part of an ensemble of teacher models and is the only watermarked model within that ensemble.
    Students are distilled with a combination of KL loss and cross entropy loss.}
    \label{fig:map_vs_acc_1wm_mixed}
\end{figure*}

\subsection{Signal Frequency Parameter Analysis}
\label{appx:frequency}
As a demonstration of \wmname{}'s versatility, we show that the method can be effective at extracting the signal function for a wide range of frequencies $f_w$.

For the different values of $f_w$ of $0.01$, $1.0$, $30.0$, $100.0$, $10^4$, $10^6$, $10^8$, we train 5 watermarked teacher models and distill 10 student models per an ensemble which includes only one watermarked teacher.
We use parameter values $\varepsilon=0.2$, and five randomly generated unit projection vectors $\mathbf{v}^0, ..., \mathbf{v}^4$.
For each student model, we compute the $P_{snr}$ value of the matching teacher watermark at the frequency $f_w$ and compare its value as we change $f_w$.
We also compute the standard deviation of $P_{snr}$ to use as a confidence interval.
We repeat the experiment for ensemble sizes $N=2, 4, 8$, where each model ensemble assembles only one watermarked teacher model with randomly generated unwatermarked teacher models as described in Subsection \ref{sec:single}.

Figure \ref{fig:k_study_v2} shows the results of this analysis.
We clearly see that \wmname{} performs very well for a very wide range of $f_w$ values, but becomes less effective if $f_w$ is too small or too large.
Frequency values of $f_w=30, 100, 10^4, 10^6$ all result in $P_{snr}$ values generally above $5.0$ for $N=1, 2, 4$.
From our previous experiments under similar settings in Subsection \ref{sec:single}, such high values will always result in an mAP of exactly or extremely close to $1.0$.
Even with $N=8$, we observe a noticeable increase in $P_{snr}$ between $f_w=30$ and $10^4$, which will increase mAP to relatively high scores, compared to another method like DAWN or Fingerprinting, as shown in Figures \ref{fig:map_vs_acc_1wm}(d) and \ref{fig:map_vs_acc_1wm_fmnist}(d).
This shows that on top of having a variety of choice for the projection vector $\mathbf{v}$, \wmname{} also offers a large range in the choice of frequency $f_w$.

\subsection{Selecting Extraction Parameters}
\label{appx:extraction_parameters}
Here we explain how to properly select the extraction parameters $q_{min}$ and $\delta$ in the extraction algorithm described in Subsection~\ref{extraction}.

The value of $q_{min}$ should be selected so that we only keep points inside the range of the signal function.
However, this can be difficult to determine systematically, as the amplitude of the signal in the student will decrease when the ensemble size increases.
To remedy this, we choose $q_{min}$ to be proportional to the median or the first quartile of the output values, which allows us to determine its value automatically.
From our experience, the first quartile value works well in the single teacher case, while the median value better generalizes for larger ensembles.

In applications where few data points have high confidence outputs, such as data sets with many classes, one may instead filter outputs below a set $q_{max}$ threshold. 
The idea is to substitute high confidence outputs that would be close to $1$, as in Figure~\ref{fig:keyidea}(a) for an unwatermarked model with low confidence outputs that would be close to $0$.
Applying the perturbation in either case should result in a cosine signal with a clearer peak in its power spectrum.

The value of $\delta$ should be chosen as to only contain a significant part of the power spectrum peak.
In numerical computations, frequency values are determined by an uniformly spaced sample, and choosing $\delta$ is equivalent to choosing a fixed number of closest frequencies to $f_w$.
Peaks can vary in width depending on the experiment, but choosing a narrower window will generally have little impact on the computed value of $P_{snr}$.
We have found that a sufficient window can contain only a few (less than 10) frequency samples to generalize well across all experiments.

\subsection{Combination of KL loss and cross entropy loss}
\label{appx:truelabel}
In this subsection, we conduct experiments on \wmname{} to extract watermarks from student models distilled with a combination of KL loss and cross entropy loss, in a similar way to several common distillation processes such as \cite{ba2014a, bucilua2006a4, hinton2015a}. 
We use the same training settings as the CIFAR10 experiments described in Subsection \ref{sec:single}.
In each task, we distill 100 student models with an equally weighted combination of KL loss from the ensemble's outputs and cross entropy loss from the ground truth labels.
We set $q_{min}$ to the median of the $\mathbf{q}_{i^*}$ values of the sampled inputs with ground truth label $i^*$, and vary the watermark amplitude $\varepsilon$ in $0.025$, $0.05$, $0.1$, $0.2$, and $0.4$.

Figure~\ref{fig:map_vs_acc_1wm_mixed} shows the results on the CIFAR10 data set for different ensemble size values, i.e., $N=1, 2, 4, 8$.
Results are plotted in the same way as in Figure~\ref{fig:map_vs_acc_1wm}, shown in Subsection~\ref{sec:single}.
Similar to the previous experiments, we also add a \textit{Random} baseline to provide a lower bound performance for all the methods. 

The accuracy losses of all watermarked models are within $1\%$ of the average accuracy of all unwatermarked teacher models except when $\varepsilon=0.4$.
\wmname{} has very high mAP for ensemble sizes $N=1, 2, 4$, as shown in Figure~\ref{fig:map_vs_acc_1wm_mixed}(a), (b), and (c). However, for the ensemble size $N=8$, \wmname{} needs to sacrifice more accuracy to achieve high mAP as shown in Figure~\ref{fig:map_vs_acc_1wm_mixed}(d). The major reason for this observation is that the watermark signal has been diluted by the cross-entropy loss since the student models are distilled with an equally weighted combination of KL loss and cross entropy loss.

\end{document}